\newtheorem{theorem}{Theorem}
\newtheorem{example}[theorem]{Example}
\definecolor{nblue}{rgb}{0.3,0.3,1.0}
\definecolor{ngreen}{rgb}{0.2,0.7,0.2}
\definecolor{nred}{rgb}{0.9,0.1,0}
\definecolor{red2}{rgb}{0.6,0.2,0.2}
\definecolor{npurple}{rgb}{0.8,0.2,0.8}
\definecolor{golden}{rgb}{0.8,0.6,0.1}
\definecolor{nsilver}{rgb}{0.3,0.4,0.5}
\definecolor{nbrown}{rgb}{0.8,0.4,0.15}
\definecolor{nrose}{rgb}{0.7,0,0.35}
\definecolor{nviol}{rgb}{0.5,0,1.0}
\definecolor{nazur}{rgb}{0,0.35,0.7}
\definecolor{nchart}{rgb}{0.2,0.4,0}
\definecolor{nbrick}{rgb}{0.55,0.25,0.15}
\newcommand{\query}{query measurement }
\begin{document}
\date{\today}
\title{Local Friendliness Polytopes In Multipartite Scenarios}
\author{Marwan Haddara}
\affiliation{Centre for Quantum Dynamics, Griffith University, Gold Coast, Queensland 4222, Australia}
\author{Eric G. Cavalcanti}

\affiliation{Centre for Quantum Dynamics, Griffith University, Gold Coast, Queensland 4222, Australia}
\begin{abstract}
    Recently the Local Friendliness (LF) no-go theorem has gained a lot of attention, owing to its deep foundational implications. This no-go theorem applies to scenarios which combine Bell experiments with Wigner's friend-type set ups, containing space-like separated superobservers who are assumed to be capable of performing quantum operations on a local observer, also known as their ``friend''. Analogously to the hypothesis of local hidden variables in Bell scenarios, a set of assumptions termed ``Local Friendliness'' constrains the space of probabilistic behaviours accessible to the superobservers to be a particular subset of the no-signalling polytope in such scenarios. It has additionally been shown, that there are scenarios where the set of behaviours compatible with Local Friendliness is strictly larger than the Bell-local polytope, while in some scenarios those sets are equal. In this work, we complete the picture by identifying all the canonical Local Friendliness scenarios, with arbitrary but finite numbers of superobservers, friends, measurements and outcomes, where the set of LF correlations admits a local hidden variable model, and where they do not. Our proof is constructive in the sense that we also demonstrate how a local hidden variable model can be constructed, given a behaviour compatible with LF in the appropriate scenarios. While our principal motivation is the foundational question of better understanding the constraints from Local Friendliness, the same inequalities constraining LF polytopes have been shown to arise in a priori unrelated contexts of device-independent information processing. Our results may thus find use in those research areas as well.
\end{abstract}

\maketitle

\section{Introduction}
Research in the branch of quantum foundations often 
described as \emph{experimental metaphysics}, following terminology introduced by Shimony \cite{shimony_1993},  has recently seen a notable advance in the form of the Local Friendliness (LF) no-go theorem \cite{Bong2020}.  This result is conceptually comparable to Bell's theorem \cite{Bell1964}, in the sense that it leads to experimentally testable inequalities which, if violated, demonstrate that a certain set of seemingly reasonable metaphysical principles cannot simultaneously hold in any physical theory -- that is, in any theory that predicts the violation of those inequalities. Interestingly, it was also shown that the Local Friendliness inequalities are predicted to be violated in a fully quantum description of certain experiments, which was furthermore demonstrated in a proof-of-principle experiment \cite{Bong2020}.

The assumptions that underlie Local Friendliness are strictly weaker than those that lead to Bell's theorem \cite{Bong2020, Cavalcanti2021} and hence the result poses new challenges in quantum foundations. A number of works have already emerged \cite{Cavalcanti2021, Cavalcanti2021Foundphys, DiBiagio2021, Haddara_2023, Wiseman2023thoughtfullocal, utrerasalarcón2023, yīng2023relating, schmid2023review, Yang2022, Ding2023, Baumann2023} that expand upon, explore, or discuss the significance of the Local Friendliness result from both theoretical and experimental perspectives.  Given this considerable interest, we believe a  more thorough investigation of the mathematical constraints implied by Local Friendliness is in order.

In the original work of \cite{Bong2020} the no-go result was demonstrated in an extension of the original Wigner's friend \cite{Wigner1961} scenario, by describing a protocol in which two technologically advanced space-like separated agents are allowed to perform arbitrary operations on their``friend''. It was shown that correlations compatible with  Local Friendliness form a convex polytope in such scenarios and hence are bounded by a finite set of linear inequalities. Depending on the number of possible measurements the parties are able to subject their friend to, the Local Friendliness polytope was furthermore shown to be either equal to the Bell polytope -- which bounds the correlations compatible with a local hidden variable (LHV) model -- or to be a strict superset of the Bell polytope. Similar correlation sets had also been studied independently in \cite{Woodhead2014} under the name of ``partially deterministic polytopes'', motivated by tasks in device-independent information processing. 
  
In this work, we generalize the scenarios of \cite{Bong2020} to include arbitrary number of parties, measurements and outcomes per party and consider cases where arbitrary subsets of the distant laboratories contain a friend. We explore the mathematical properties of correlations compatible with the assumption of Local Friendliness in such scenarios, extending the present knowledge beyond the bipartite scenarios of \cite{Bong2020} and \cite{Woodhead2014} (see also \cite{utrerasalarcón2023}). In particular we identify all those scenarios where the Local Friendliness polytope equals the Bell polytope, hence identifying when ``genuine'' Local Friendliness inequalities (i.e. facets of the LF polytope that are not facets of the corresponding Bell polytope) may arise. Our proofs are constructive in the sense that whenever the Local Friendliness set equals the LHV set, we also show how the LHV model can be constructed given a behaviour compatible with Local Friendliness.

This work is structured as follows. In Section \ref{CanonicalScenariosSection} we describe the type of scenarios considered in this work, and recall the definition of Local Friendliness. In Section \ref{basicDefinitionsSection} we review the basic definitions, mathematical tools and previously known results related to this problem. We also present some basic facts about sets compatible with Local Friendliness. Our main results are contained in Section \ref{ResultsSection}, where we provide an analytical classification of Local Friendliness polytopes. This classification is complete in the sense that one can identify not only the scenarios where the LF correlations always admit an LHV model, but also all the distinct scenarios where the correlations compatible with Local Friendliness are equal to each other. In Section \ref{DiscussionSection} we demonstrate our results in a few example scenarios of particular interest, and discuss their significance in a broader context. We finish with concluding remarks in Section \ref{ConclusionSection}.

\section{Canonical Local Friendliness scenarios \label{CanonicalScenariosSection}} The original Local Friendliness no-go result \cite{Bong2020} was exhibited in a type of extended Wigner's friend scenario suggested by Brukner \cite{Brukner2017,Brukner2018}, which combines ingredients from the Bell \cite{Bell1964} and Wigner's friend \cite{Wigner1961} set ups. We will henceforth refer to such scenarios as `canonical' or `standard' LF scenarios, since recent works have extended the investigation of the implications of Local Friendliness to more exotic situations which include sequential measurements \cite{utrerasalarcón2023} or that dispense with space-like separation \cite{yīng2023relating}. It is conceivable that one could devise further protocols where the assumptions underlying Local Friendliness have nontrivial consequences. Discussion of nonstandard scenarios is, however, beyond the scope of this work. 

A general canonical LF scenario $S$ consists of a number N of space-like separated superobservers or agents\footnote{Here we will refer to the superobservers but not the observers as ``agents'', because in the scenarios we consider only the superobservers have choices of measurements. Note however that the observers may also be agents in other senses, e.g.~they may be AI agents running in a large quantum computer~\cite{Wiseman2023thoughtfullocal}.}, which we index  by a set $I_A = \{ 1 \ldots N \}$.  Each superobserver $i$ has a number of measurements labelled by $x_i \in M_i = \{ 1 \ldots m_i \}$ to choose from, and for each measurement a number of possible outcomes $a_{x_i} \in O_{x_i} = \{ 1\ldots o_{x_i} \}.$ Some of the superobservers have a ``friend'', which we will also refer to as simply an `observer'. The index set of the friends will be taken to be a subset $I_F \subset I_A$ with the reading that if $i \in I_F$, then the $i$-th superobserver has a friend. We use the convention where $\subset$ denotes the basic  `is a subset of' relation, while $\subsetneq$ is used if emphasis on the subset being strict is warranted. Let us define $O = \{O_{\Vec{x}} \}_{\Vec{x}\in M}$ so a scenario $S$ may be specified by the quadruple $S = (I_A, I_F, M, O)$.  To avoid complications due to trivial sets of parameters, we assume that the sets $I_A, M_i, O_{x_i}$  have cardinality at least 2 for all $i, x_i$. For an arbitrary set $V$, we denote its cardinality by $|V|$. Lastly, it will also be useful to introduce $S_p= (I_A, M, O)$, the triple of parameters which define the \emph{public scenario}. We will occasionally denote $S = (S_p, I_F )$ for a canonical LF scenario $S$, where found useful. Conceptually we  may identify $S_p$ with the standard Local Friendliness scenario in which $I_F = \emptyset$.

In the experimental protocol, a common source sends systems to each site $i$, which is then measured by an observer if one is located at that site. Each observer will record some outcome, denoted as $c_i$ for a friend at site $i$, and store it in their memory. From the perspective of the superobservers, the friends are physical systems, on which by assumption (close to) arbitrary quantum operations can be implemented, hence the reason for the descriptive prefix `super' in `superobserver'. 

After the observers have recorded their outcomes, the entire contents of their laboratory will be subject to some measurement by the superobserver at their site. One measurement, which by convention is chosen to be $x_i =1$, will be given a special role: it corresponds to the case where the agent reads their friend's memory and assigns $a_i =c_i$, so that their outcome matches that of the friend\footnote{This requires the complementary assumption that the reading of the friend's memory corresponds to the outcome observed by the friend, which was called ``Tracking'' in \cite{schmid2023review}.}. This input shall be referred to as the ``query measurement". If a friend does not exist at site $i$, then no special constraints are imposed on any of the measurements. The case where $I_F = \emptyset $ corresponds to an ordinary Bell scenario. An example of a canonical LF scenario is portrayed in Fig.~\ref{fig:ThreepartyScenario}.  \begin{figure}
    \centering
    \includegraphics[width=\columnwidth]{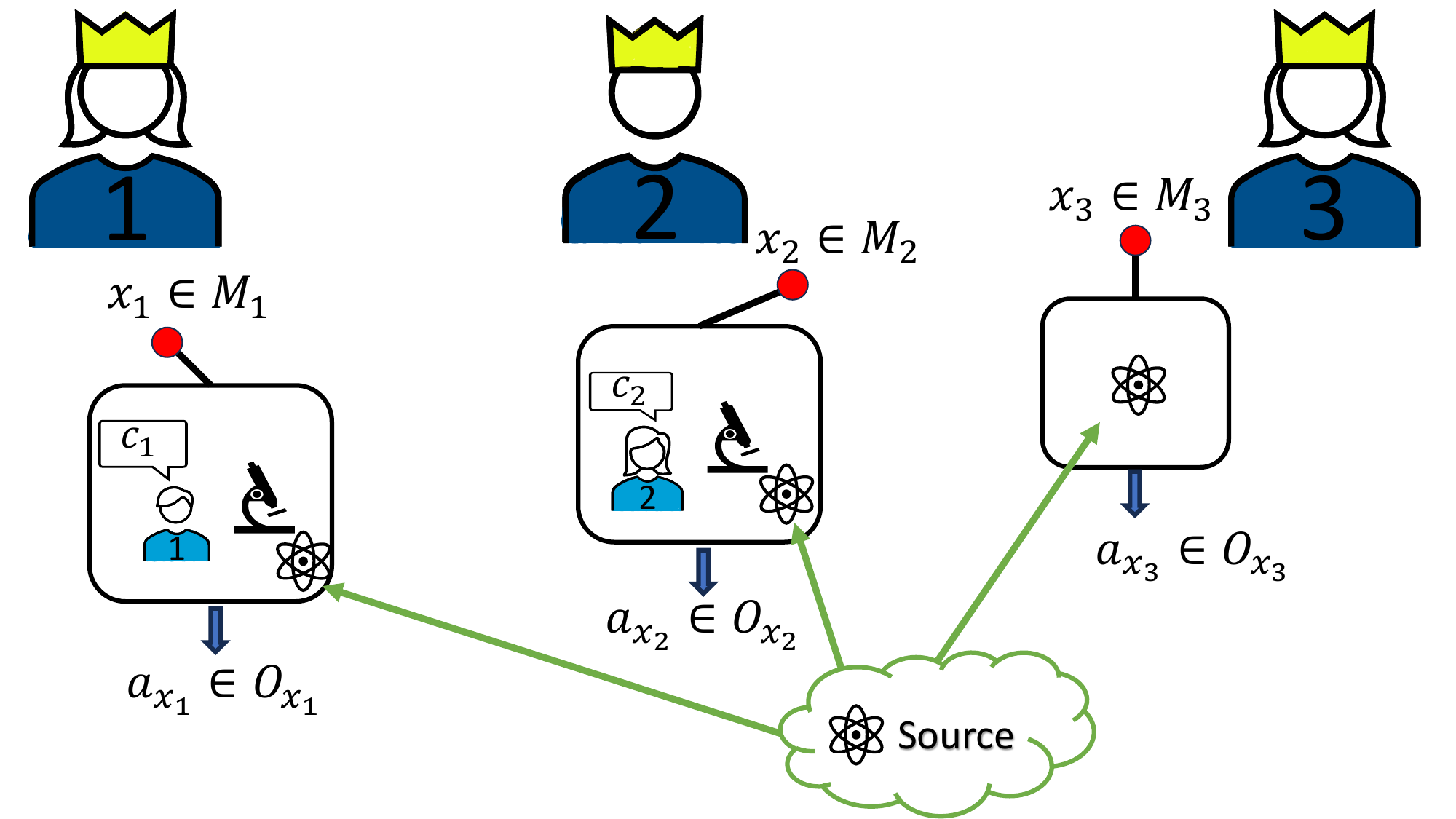}
    \caption{An illustration of a canonical Local Friendliness scenario with $I_A = \{1, 2, 3\}$ and $ I_F = \{1, 2 \}$. A source sends systems to three distant laboratories. The superobservers are depicted wearing crowns and shirts indicating their label. The superobservers have sets $M_1, M_2$ and $M_3$ of interventions they can perform on the contents of the boxes. Following the interventions,  their  devices produce some outcomes $a_{ x_i}$ from sets $O_{x_i}$. Superobservers 1 and 2 have a friend, who observe a physical system and record outcomes $c_i$ inside their sealed laboratories.  According to the protocol for the inputs inputs $x_i = 1$  of the first two superobservers it must be that $a_{x_i = 1} = c_i$ while no such constraint is imposed on the corresponding input $x_3 =1$ of superobserver 3.  }
    \label{fig:ThreepartyScenario}
\end{figure}

In line with the black-box picture of Bell's theorem, the relevant mathematical objects in these types of scenarios are the probability distributions

\begin{align}
    \wp(a_{x_1}, a_{x_2}, \ldots,  a_{x_N}| x_1 , x_2, \ldots , x_N ),\label{primitiveBehaviour}
\end{align}
that is, a distribution for each set of measurement choices. These distributions represent the statistics collected by the agents at the end of the experiment, and their collection is in the context of Bell experiments commonly referred to as the \emph{behaviour}~\cite{Brunner2014}. 

Given the generality of our presentation, notation of the kind used in Eq.~(\ref{primitiveBehaviour}) quickly becomes cumbersome. We therefore introduce the string notation where the expression of Eq.~(\ref{primitiveBehaviour}) is replaced with  $\wp(\Vec{a}| \Vec{x})$, where $\Vec{x}  \in M =  M_1 \times M_2 \times \ldots \times M_N$ is the string representing the choices of measurements and $\Vec{a} \in O_{\Vec{x}}= O_{ x_1}\times O_{x_2} \times \ldots \times O_{x_N}$ is a string of outcomes for those measurement choices. Note that we have dropped the index $\Vec{x}$ from the subscript of $\vec{a}$. This should lead to no confusion, since the measurement context can be taken to be  implicit in the notation of the conditional distributions.  For convenience, we may therefore  simply denote the outcome of the  $i'$th superobserver in the string $\Vec{a}$ by $a_i$. If the detail is useful or significant, we will occasionally revert this back and indicate $a_{x_i}$ to represent the outcome of agent $i$ for their intervention $x_i$, or $a_{x_i = s}$ if emphasis for the chosen measurement $s$ is warranted.

Since we will also be considering marginals of behaviours and cases where the distributions are independent of some of the inputs $x_i$, we may also denote 
\begin{align}
    \sum_{a_i, i \in E} \wp (\Vec{a}|\Vec{x}) := \wp(\Vec{a}_J|\Vec{x}), \hspace{0.4cm}J = (I_A \setminus E), \label{substringNotationEquation}
\end{align} 
with $\Vec{a}_J $ referring to a substring of $\vec{a}$ consisting of the elements indexed by $j \in J$. When referring to some individual elements such as $\vec{x}_j$, we will sometimes omit the arrow and take it to be implicitly understood that by $x_j$ we refer to the $j$'th element of the string $\vec{x}$.

We note that despite the inclusion of the friends in the picture, the behaviour is still the relevant object of interest since as per the description of the protocol no trace of the friends outcomes have to exist at the end of the experiment for all measurements. It is this simple observation, that can be used to justify the intuition why the assumption of Local Friendliness has nontrivial implications in scenarios where $I_F \neq \emptyset$. 

\definition{(Local Friendliness)\\
\emph{Local Friendliness} is the conjunction of the following two principles\label{LocalFriendliness}:
\begin{enumerate}[label=\roman*), ref=\ref{LocalFriendliness}.\roman*]
    \item  (Absoluteness of Observed Events\label{axiomAOE}) (AOE)
Every observed event is an absolute single event, not relative to anything or anyone.

    \item  (Local Agency) \label{LA} (LA)
The only relevant events correlated with an intervention are in its future light cone. 
\end{enumerate}
}
\normalfont

  In particular, the assumption of Absoluteness of Observed Events in Def.~\ref{axiomAOE} justifies the assignment of a probability distribution over the  observed outcomes of all observers, including in cases where the superobservers do not read those records. In the canonical LF scenario, it thus implies the existence of a distribution $
    P(\vec{a}\vec{c}|\vec{x}) $ such that the behaviour can be recovered by marginalization:
\begin{equation}
    \wp(\vec{a}|\vec{x}) = \sum_{\vec{c}} P(\vec{a}\vec{c}|\vec{x}). \label{JointforLF}
\end{equation} 

Additional constraints arise from Local Agency, which implies that
\begin{equation}
     P(\vec{a}\vec{c}|\vec{x}) = P(\vec{a}| \vec{x}, \vec{c}) \cdot P(\vec{c}), \label{LFLocalAgency1}
\end{equation}
and that for any subset of superobservers $V \subset I_A$ and $\forall \Vec{a}_V, \Vec{x}$
\begin{equation}
    P(\Vec{a}_V| \Vec{x}, \Vec{c}) = P(\Vec{a}_V|  \Vec{x}_V, \Vec{c}) \label{LFlocalAgency2}
\end{equation}
owing to the space-time relations of the experiment. 

Under the assumption of Absoluteness of Observed Events, the protocol for the \query where a superobserver asks their friend for their outcome  justifies the mathematical condition that for all subsets $F_{\Vec{x}}$ of parties that perform that measurement in a given run, their observed outcome $a_i$ will be equal to the outcome $c_i$ observed by their friend. That is, $\forall i\in \{ i \in I_F| \vec{x}_i =1   \} = F_{\Vec{x}} \subset I_F$
\begin{align}
    P(\vec{a}_{F_{\Vec{x}}} |\vec{x}, \vec{c}) = \prod_{i \in F_{\vec{x}}}\delta_{a_i,c_i}. \label{LFspecialMeasurement}
\end{align}

We emphasize that the condition in Eq.~\eqref{LFspecialMeasurement} is context sensitive in the sense that  the decomposition holds only for those parties $i$ which have $x_i =1$ given a string $\Vec{x}$. Notation such as $F_{\vec{x}}$ will be used throughout the paper for  context-dependent sets. 

  \begin{figure}
    \centering
    \includegraphics[width=\columnwidth]{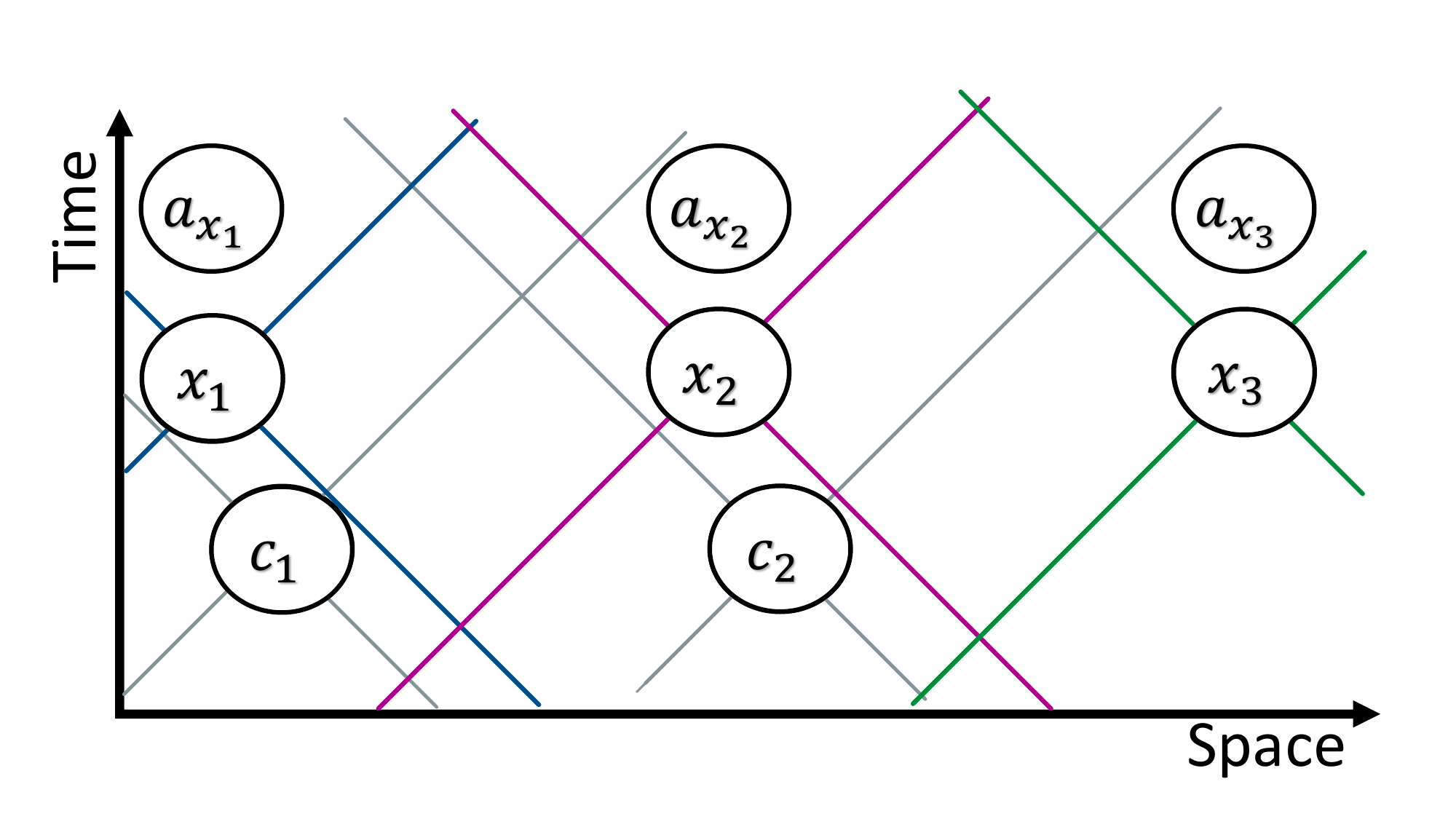}
    \caption{\label{fig:spacetimeThreeparty}
    The space-time diagram corresponding to the scenario of Fig.~\ref{fig:ThreepartyScenario}. The only variable in the future light cone of $x_i, i \in \{1,2,3 \}$, is $a_i$, and hence by Local Agency of Def.~\ref{LA} that intervention is uncorrelated with everything else. 
    }
\end{figure}

  Figure \ref{fig:spacetimeThreeparty} illustrates the space-time relations of the scenario depicted in Fig.~\ref{fig:ThreepartyScenario}. As an illustration, we demonstrate the implications of Def.~\ref{LocalFriendliness} in this scenario. 
  
  In the situation of Fig.~\ref{fig:ThreepartyScenario}, Eqs.~\eqref{JointforLF} - \eqref{LFLocalAgency1} can be used to decompose any behaviour as 
  \begin{align}
      \wp(\Vec{a}|\Vec{x}) = \sum_{c_1,c_2} P(a_1,a_2, a_3|x_1,x_2,x_3 ,c_1, c_2)P(c_1, c_2). \label{ExampleFigureEquation}
  \end{align}  Equation \eqref{LFlocalAgency2} further implies that these distributions satisfy the marginal constraints 
  \begin{align}
      P(a_i,a_j | x_1, x_2, x_3, c_1, c_2) = P(a_i,a_j | x_i, x_j, c_1, c_2)
  \end{align} for all $(i\neq j), i,j \in \{1,2,3 \}$ and \begin{align}
      P(a_i|x_1,x_2,x_3,c_1 c_2) = P(a_i|x_i,c_1, c_2) 
  \end{align} for all $i \in \{ 1,2,3 \}$. These  identities can be used in conjunction with Eq.~\eqref{LFspecialMeasurement} and the definition of conditional probability to get 
  \begin{align}
      &P(a_1,a_2,a_3| x_1=1,x_2=1,x_3, c_1,c_2) \\
      &= P(a_3|x_3,c_1,c_2)P(a_1, a_2|a_3, x_1=1,x_2=1,x_3 ,c_1,x_3) \\
      &= P(a_3|x_3,c_1,c_2)\delta_{a_1,c_1} \delta_{a_2,c_2}
  \end{align}  for arbitrary $x_3$ and 
  \begin{align}
      &P(a_1,a_2,a_3| x_1,x_2,x_3, c_1,c_2) \\
      &= P(a_3, a_j|x_j, x_3, c_1, c_2)P(a_i|a_j, a_3, x_i, x_j, x_3, c_1, c_2) \\
     & = P(a_3, a_j|x_j, x_3, c_1, c_2) \delta_{a_i, c_i}
  \end{align} for $i,j \in \{1,2 \}, i\neq j$, $x_j \neq x_i = 1$. Gathering these identities, Eq.~\eqref{ExampleFigureEquation} may be rewritten as 
  \begin{align}
      \wp(\Vec{a}|\vec{x}) = \begin{cases}
      \sum_{\Vec{c}}  \delta_{a_1, c_1} \delta_{a_2,c_2} P(a_3|x_3, \Vec{c})P(\Vec{c}) & x_1, x_2 = 1 \\ 
      \sum_{\Vec{c}} \delta_{a_i, c_i} P(a_j, a_3| x_j, x_3, \Vec{c})P(\Vec{c}) & x_j \neq x_i =1, \\
          \sum_{\Vec{c}} P(\Vec{a}|\Vec{x})P(\Vec{c})  & x_1, x_2 \neq 1.\\
      \end{cases}
  \end{align}
  Emphatically, a priori,  the decomposition of $\wp(\Vec{a}|\Vec{x})$ implied by Def.~\ref{LocalFriendliness} therefore depends on the inputs $\Vec{x}$ in general.

Aside from this brief motivation, we do not intend to discuss the significance of these metaphysical principles further. In this work, we are mainly interested in the mathematical characterization of behaviours compatible with Def.~\ref{LocalFriendliness} in different scenarios. We are now ready to present the relevant mathematical definitions and tools.

\section{Basic definitions and mathematical tools \label{basicDefinitionsSection}}

Given the conventions of Section \ref{CanonicalScenariosSection}, we proceed with the relevant definitions.

\definition{(LHV behaviours) \label{LHVbehaviours}\\
A behaviour is called \emph{LHV-modelable}, or \emph{Bell-local}, if and only if it can be given a local hidden variable (LHV) model. That is, iff
\begin{align}
    \wp(\vec{a}| \vec{x}) = \int_{\Lambda} d\lambda \, p(\lambda) \left[\prod_{i=1}^N P(a_i|x_i, \lambda)\right] \label{Bell-localmaths},
\end{align}
with $a_i \in \Vec{a}, x_i \in \Vec{x}, $ for some set $\Lambda$ and measure $p(\lambda) \geq 0, \int_{\Lambda} p(\lambda) = 1$.
The set of LHV behaviours in a given scenario $S$ is denoted by $\mathcal{B}(S)$.
}

\normalfont

Since the observers do not enter Def.~\ref{LHVbehaviours} it is clear that $\mathcal{B}(S_p, I_F) = \mathcal{B}(S_p)$.

We recall the following well-known result on LHV models often attributed to Arthur Fine \cite{Fine1982}. 

\theorem{(Fine's theorem\footnote{The relevant elements of the theorem presented here can be found in Fine's work \cite{Fine1982}, though the theorem there considers only a specific bipartite scenario and contains other statements as well. For alternative presentations of this result, or parts of it,  we refer the reader for example to \cite{Werner2001}, \cite[Sec. II.B]{Brunner2014} or \cite[Sec. 8.]{Abramsky2011}.}) \label{FinesTHRM} \\
The following three statements are equivalent. 
\begin{enumerate}[label=\roman*), ref=\ref{FinesTHRM}.\roman*]
    \item The behaviour $\wp(\vec{a}|\Vec{x})$ is LHV modelable in the sense of Def.~\ref{LHVbehaviours}. \label{FinesLHVmodelable}

    \item There exists a joint distribution \label{FinesJointDistribution} ${P}(\vec{\alpha})$, $\vec{\alpha} = (\alpha_{x_1 = 1},\ldots ,\alpha_{x_1 = m_1}, \ldots ,\alpha_{x_N = m_N}) \in O_{x_1 = 1} \times \ldots \times O_{x_1 = m_1} \times \ldots \times O_{x_N = m_N} $
    over all the potential outcomes of all possible measurements for all agents, which recovers all the distributions in the behaviour $\wp(\vec{a}|\Vec{x})$ by appropriate marginalization. That is,
    \begin{equation}\label{JointdistributionMarginalizationEq}
     \wp(\vec{a}_{\vec{x}}|\vec{x}) = P(\vec{\alpha}_{\vec{x}} = \vec{a}_{\vec{x}}) 
    \end{equation}
    where 
    \begin{align}
          P(\vec{\alpha}_{\vec{x}}) := \sum_{\alpha_{x_i'}: x_i' \neq x_i} {P}( \Vec{\alpha}).
    \end{align}

    \item \label{FinesDeterministicDistribution} The behaviour $\wp(\vec{a}|\Vec{x})$ admits a fully deterministic separable model of the form
    \begin{equation}\label{FinesDeterministicEquation}
        \wp(\Vec{a}|\Vec{x}) = \sum_\lambda P(\lambda)\, \left[ \prod_{i}\delta_{a_i, \alpha_{x_i}(\lambda)}\right].
    \end{equation} Here  $\alpha_{x_i}: \Lambda \rightarrow O_{x_i}$ while $a_i$ and $x_i $ are elements of the strings $\vec{a}, \vec{x}$ for all $i$ respectively.
\end{enumerate}
}

\begin{proof}
    Various proofs of parts of this result exist in the literature. We provide a proof for completeness.

    Clearly the case \ref{FinesDeterministicDistribution} is a special case of \ref{FinesLHVmodelable} so the implication $\ref{FinesDeterministicDistribution} \Rightarrow \ref{FinesLHVmodelable}$ holds. To show that these are all equivalent, it is therefore sufficient to demonstrate that \ref{FinesLHVmodelable} $\Rightarrow\ref{FinesJointDistribution} \Rightarrow \ref{FinesDeterministicDistribution}$. Let us then assume that \ref{FinesLHVmodelable} holds. Then, in particular, every distribution of the form $P(a_{x_i}| x_i, \lambda)$ is well defined, so one can construct a joint by taking the product as 
    \begin{equation}
        {P}(\Vec{\alpha}) = \int_{\Lambda} d\lambda p(\lambda)  \left[  \prod_{(i,s) \in I_A \times M_i} P(a_{x_i=s}|x_i, \lambda) \right].
    \end{equation}
    By definition, this recovers the probabilities of the behaviour $\wp(\vec{a}|\Vec{x})$ compatible with assumption \ref{FinesLHVmodelable} as marginals, so \ref{FinesLHVmodelable} $\Rightarrow \ref{FinesJointDistribution}$ holds. Suppose then that \ref{FinesJointDistribution} holds. Since then, 
    \begin{equation}
        \wp(\Vec{a}|\Vec{x}) =  \sum_{\alpha_{x_i'}: x_i' \neq x_i} {P}( \Vec{\alpha}) =  \sum_{\Vec{\alpha}} \left(  \prod_{i \in I_A} \delta_{a_i, \alpha_{x_i}(\vec{\alpha})}  \cdot {P}( \Vec{\alpha})\right) ,
    \end{equation} it is evident that the  implication \ref{FinesJointDistribution} $\Rightarrow \ref{FinesDeterministicDistribution}$ holds as well, by identifying $\vec{\alpha}$ with $\lambda$ of Eq. \eqref{FinesDeterministicEquation}. That is, the 
   distribution ${P}(\Vec{\alpha})$ can now be understood as the distribution over a finite number of LHVs  which determine the outcomes for all the measurements. 
\end{proof}
\normalfont

Although Fine's theorem is phrased here with reference to operational quantities such as the behaviour $\wp(\vec{a}|\vec{x})$, one may consider it as a purely mathematical result which relates constraints, as phrased in \ref{FinesLHVmodelable} and \ref{FinesDeterministicDistribution}, on a family of probability distributions (here the behaviour) to the existence of a single probability space in which all the variables can be embedded while recovering the same family of distributions as marginals, as in \ref{FinesJointDistribution}. Phrased this way, Theorem \ref{FinesTHRM} can be understood to be related to the general kinds of probability extension problems considered, for example, by Vorob'ev \cite{Vorobeev1962}. 

\definition{(Quantum behaviours\label{quantumbehaviour})\\
The set $\mathcal{Q}(S_p)$ of quantum behaviours in a public scenario with $N=|I_A|$ agents is the set which can be modelled by a suitable
Hilbert space of the form $\mathcal{H}_{\Vec{A}} = \mathcal{H}_{A_1} \otimes \mathcal{H}_{A_2} \otimes \ldots \otimes \mathcal{H}_{A_N}$, a quantum state $\rho$, i.e. a positive unit-trace linear operator $\rho: \mathcal{H}_{\Vec{A}} \rightarrow \mathcal{H}_{\Vec{A}} $, and a set of POVM's with elements $M_{a_i|x_i}$, i.e.~for which $M_{a_i|x_i} \geq 0 \; \forall i,x_i, a_i$ and $\sum_{a_i \in O_{x_i}} M_{a_i|x_i} = I_{\mathcal{H}_{A_i}} \forall i, x_i$, such that
\begin{align}
    \wp(\Vec{a}|\Vec{x}) = \mathrm{tr}[M_{a_1|x_1} \otimes M_{a_2|x_2} \otimes \ldots \otimes M_{a_n | x_N} \rho ]. \label{QuantumGeneralBornRule}
\end{align}
}
\normalfont
Definition \ref{quantumbehaviour} is well-known from the context of Bell scenarios and may be understood as depicting a situation where $N$ agents each perform a local measurement $x_i$ on their share of the joint quantum state. While the portrayal of the scenario in Section \ref{CanonicalScenariosSection} was emphatically theory-independent, part of the appeal of the Local Friendliness no-go result is the fact that quantum descriptions of such scenarios allow for demonstrating the violation of its assumptions \cite{Bong2020,Haddara_2023}. Since the inclusion of the friends in the picture essentially means adding more physical systems to the scenario,  one should not, to be precise, postulate outright that the set of quantum correlations $\mathcal{Q}(S_p)$ defined for the Bell case equals that of $\mathcal{Q}(S)$ of a general canonical LF scenario since the additional physical systems and steps in the protocol ought to be included in the model. More attentively, it is clear that $\mathcal{Q}(S) \subset \mathcal{Q}(S_p)$, as the agents are treating their friends as parts of the experiment, but the converse inclusion is not self-evident from the description of the protocol. We will therefore provide an argument of why Def.~\ref{quantumbehaviour} is adequate in the general case as well.

Let us first recall that by application of Naimark's theorem, see for example \cite{Busch2016}, any behaviour representable in the form of Eq.  \eqref{QuantumGeneralBornRule} can from a mathematical perspective be equivalently be expressed by some sets of $N$ local projective measurements $\hat{P}_{a_i|x_i}$ on a pure state $\rho \equiv \hat{P}_{\eta}, \eta \in \mathcal{H}_{\Vec{S}}$, by enlarging the Hilbert spaces $\mathcal{H}_{\mathcal{A}_i}$ to $\mathcal{H}_{A_i} \otimes \mathcal{H}_{e_i}$, if necessary. Physically this may be understood as the agents taking a part of the environment surrounding the systems into account as well. Since we may assume that in a general LF scenario the friends are contained in a fully sealed laboratory over which the agents have complete control, we are (arguably) justified to model the entire experiment by pure states and projective measurements. For our purpose the question whether this type of argument is watertight from a physical perspective is in any case in fact secondary, since our claim essentially is that  a quantum model with the right reading exists for any canonical LF scenario.

\proposition{Any quantum behaviour $\wp(\Vec{a}|\Vec{x}) \in \mathcal{Q}(S_p)$ can be realized in a standard LF scenario with $I_F$ arbitrary, consistently with the description of the protocol, assuming the evolution due to measurements can be described unitarily. \label{QuantumBehaviourIndependentofIC}}
\begin{proof}
    Suppose $\wp(\Vec{a}|\Vec{x})$ is a behaviour which may be reproduced in a Bell scenario by projective measurements $\hat{P}_{a_{x_i}|x_i}$ on a pure state $\hat{P}_{\eta}$,  $\eta \in \mathcal{H}_{\Vec{S}} = \bigotimes_{i=1}^N \mathcal{H}_{S_i}$. For each $i \in I_F \cap I_A = I_F$ introduce a Hilbert space $H_{F_i}$ and a pure state $\hat{P}_{R_i}$ representing the contents of the entire laboratory of the friends aside from the system $S_i$, with the index $R_i$ representing the 'ready' state of the laboratory in the beginning of the experiment. For consistency we require that $\mathrm{dim} (\mathcal{H}_{F_{i}}) \geq |O_{x_i = 1}|$, to guarantee that the distinct outcomes of the measurements can be faithfully read from the friends' records. 

    The interaction between the friend and the system may be modelled by unitaries $U_{i}: \mathcal{H}_{F_i} \otimes \mathcal{H}_{S_i} \rightarrow \mathcal{H}_{F_i}  \otimes \mathcal{H}_{S_i},$ which correlate the system states with their laboratory, so that  
    \begin{equation}
        \prod_{i \in I_F } \hat{P}_{R_i}^{\otimes} \otimes \hat{P}_{\eta} \mapsto \hat{P}_{\Vec{U} \eta} = U \hat{P}_{\Vec{\eta}} U^\dagger.
    \end{equation}
The requirement for the unitary is that for every  $i \in I_F $
\begin{align} \label{subspaceConditionQuantum}
    & \mathrm{supp}[ U_i (\hat{P}_{R_i} \otimes \hat{P}_{a_i|x_i =1} ) U_i^\dagger ] \subset \\
   & \mathrm{supp} [(\hat{R}_{a_i| x_i =1})] \otimes  \mathrm{supp} [(\hat{P}_{a_i|x_i =1})], 
\end{align}
     where the support of an operator $A$ is defined as the set $\mathrm{supp} [A] = \{x| A(x) \neq 0 \}$. Here  $\hat{R}_{a_i|x_i =1}$ is an orthogonal projector on $\mathcal{H}_{F_i}$ for each distinct $a_i$ with the property that $\sum_{a_i} \hat{R}_{a_i| x_i =1} = I_{\mathcal{H}_{F_i}}. $ First of all, unitaries which such conditions exist, as one may understand it as a form of Wootters-Zurek cloning \cite{Wootters1982} where the basis elements of $\mathcal{H}_{S_i}$ in the support of $\hat{P}_{a_i|x_i =1}$ are copied into basis elements in the support of $\hat{R}_{a_i| x_i =1}$. The normalization of the $\hat{R}_{a_i| x_i =1}$ can be guaranteed by adding the remaining basis elements of $\mathcal{H}_{F_i}$ arbitrarily to the range of some $\hat{R}_{a_i| x_i =1}$ i.e by defining $R_{a_i|x_i} = \mathcal{I}_{\mathcal{H}_{F_i}} -\sum_{j \neq i} R_{a_j|x_j}$ for some $i$. Second  the $\hat{R}_{a_i| x_i =1} \otimes \hat{P}_{a_i|x_i =1} = \Tilde{M}_{a_i | x_i =1} $ are by construction PVM's and  by defining $\Tilde{M}_{a_i |x_i} = U_i\hat{P}_{a_i|x_i = 1}U_i^*$  for any $i \in I_F$ with $x_i \neq 1$  one sees that  
\begin{align}\label{traceRecoveredSpecialsetting}
        \mathrm{tr}[ (\bigotimes_{i\in I_A} \Tilde{M}_{a_i |x_i}) \hat{P}_{\vec{U} \eta}] = \mathrm{tr}[(\bigotimes_{i\in I_A}  \hat{P}_{a_i|x_i})  \hat{P}_{\eta} ],
     \end{align}
where $\Tilde{M}_{a_i | x_i} = \hat{P}_{a_i|x_i}$ whenever $i \notin I_F$. Eq.~\eqref{traceRecoveredSpecialsetting} means that the probabilities of the arbitrary Bell-experiment can be lifted to the scenario with nonempty set of friends while holding on to the interpretation of the condition for the \query $x_i =1$ corresponding to reading the record of the outcome by the observer.  
\end{proof}
\normalfont
The proof of Prop.~\ref{QuantumBehaviourIndependentofIC} shows that at least if the friends' measurements are modelled unitarily, then the superobservers can lift arbitrary correlations producible from the $N$ quantum systems sent by the source in the initial stage of the protocol. It may be the case that this does not hold for alternative theories describing the evolution of the states of the friends, for example if their observations cause an objective collapse of the quantum state. While we will later establish a few general results that hold for the set $\mathcal{Q}(S) = \mathcal{Q}(S_p)$ relative to Def.~\ref{quantumbehaviour}, exploring this issue further is not the main purpose of this work, and we shall take Def.~\ref{quantumbehaviour} as valid in arbitrary canonical scenarios. 

  \definition{(No-signalling behaviours\label{No-signallingbehaviour})\\
  The set $\mathcal{NS}(S)$ is the set of behaviours that obey the no-signalling constraints, that is $\forall (i, x_i \neq x_i')$
  \begin{align}
    \sum_{a_i} \wp(\Vec{a}| x_1\ldots x_i \ldots x_N) = \sum_{a_i} \wp( \vec{a}|x_1 \ldots x_i' \ldots x_N). \label{no-signalling}
 \end{align}}

\normalfont
Similarly to the case of the LHV set in Def.~\ref{LHVbehaviours}, it is clear that $\mathcal{NS}(S)=\mathcal{NS}(S_p)$. 

We will mostly use the notation  $K(S)$  for the sets $K \in \{\mathcal{B}, Q, \mathcal{NS}\}$ in arbitrary canonical LF scenarios $S$, despite the established identities $\mathcal{K}(S)=\mathcal{K}(S_p)$ as strictly speaking a scenario $S$ has physically identifiable structure not present in $S_p$. We will occasionally revert back to denoting by $K(S_p)$  any one of those sets if emphasis on the fact that the mathematical features of those sets do not depend on $I_F$ is found useful or convenient.  

A comprehensive characterization of the sets $\mathcal{B}(S), \mathcal{Q}(S)$ and $\mathcal{NS}(S)$ can be given by a geometric approach, see for example \cite{Brunner2014} for a review. A way to formalise this perspective is to note that the behaviour $\wp(\Vec{a}|\Vec{x})$ may be represented as points in $\mathbb{R}^D$ with $D = \sum_{\vec{x}\in M}  |O_{\vec{x}}|$ the number of probabilities for the outcomes in all contexts in the behaviour.  While each  behaviour satisfies $\forall (\Vec{a}, \vec{x}) $ $\wp(\Vec{a}|\Vec{x}) \geq 0$ and $\forall \vec{x}$ $\sum_{\Vec{a}}\wp(\Vec{a}|\Vec{x}) = 1 $, the additional constraints in Defs.~\ref{LHVbehaviours}, \ref{quantumbehaviour} and \ref{No-signallingbehaviour} further bound each of the sets respectively. 

By theorem \ref{FinesDeterministicDistribution} the set $\mathcal{B}(S)$ can be completely classified as the convex hull of a finite set of points, which is the defining feature of an object known as a convex polytope \cite{Grünbaum2003}. Such a polytope can be alternatively expressed as the intersection of a finite set of half spaces, which is to say that a finite number of inequalities of the type $\Vec{\wp} \cdot \vec{r} \leq R$, with $\vec{r}$ a fixed vector and $R$ a constant, have to hold for each $\vec{\wp}\in \mathcal{B}(S)$. In this context, these are examples of linear Bell inequalities; they may be evaluated from experimentally accessible data to test whether a behaviour is in the set $\mathcal{B}(S)$ or not. Any inequality that can witness for some behaviours that $\vec{\wp}\notin \mathcal{B}(S)$ is referred to as a \emph{Bell inequality}, and the phrase \emph{optimal} or \emph{tight} Bell inequality is used for those which delimit the facets of the polytope \cite{Brunner2014}.

The set $\mathcal{NS}(S)$ is also a polytope, see for example \cite{Pironio2005} for a general proof. While the set  $\mathcal{Q}(S)$ is also convex, its geometry is significantly more complicated than a polytope, as has been explored in detail for example in \cite{Goh2017,Le2023quantumcorrelations}. 
Nonetheless, the non-strict inclusions $\mathcal{B}(S) \subset\mathcal{Q}(S) \subset \mathcal{NS}(S)$ are straightforwardly demonstrable. This combined with the facts that the quantum upper bound $R_Q$ for a given Bell inequality can be larger than the LHV bound \cite{Tsirelson1980} while the no-signalling bound $R_{NS}$ of a given inequality can be even higher than the quantum bound \cite{Popescu1994}  is sufficient to establish the well-known strict inclusions $\mathcal{B}(S) \subsetneq \mathcal{Q}(S) \subsetneq \mathcal{NS}(S)$ \cite{Brunner2014}.

Lastly, we recall that the dimension of a polytope is the dimension of its affine hull \cite{Grünbaum2003}.  Since the sets $\mathcal{B}(S), \mathcal{Q}(S)$ and $\mathcal{NS}(S)$ are all constrained by the normalization and no-signalling conditions, none of them are full-dimensional in $\mathbb{R}^D$ and in fact, it can be shown \cite{Pironio2005} that $\mathrm{dim}[\mathcal{B}(S)] = \mathrm{dim}[\mathcal{NS}(S)]$, which also means that the set $\mathcal{Q}(S)$ lives in the same affine subspace of $\mathbb{R}^D$ as the LHV and no-signalling sets.  

For our purpose, it turns out that these basic notions of polytope theory are sufficient.  We are now ready to define the main object of study of this work.

\definition{(Local Friendliness behaviours)\label{LocalFriendlinessBehaviours} \\
The set $\mathcal{LF}(S)$ of \emph{Local Friendliness behaviours} for a scenario $S= (I_A, M, O, I_F)$ consists of those behaviours that can be given a model consistent with the defining Eqs.~\eqref{JointforLF}-\eqref{LFlocalAgency2}. These equations can be used in conjunction with the definition of conditional probabilities  to decompose any behaviour $\wp(\Vec{a}|\Vec{x})$ as 
\begin{align}
    \wp(\Vec{a}|\Vec{x}) &= \sum_{\vec{c}}P^{\mathcal{LA}}(\Vec{a}|\Vec{x}, 
    \vec{c}) \cdot P(\Vec{c}) \nonumber\\
    &= \sum_{\Vec{c}}P^{\mathcal{LA}}(\Vec{a}_{F_{\Vec{x}}}| \Vec{x}, \Vec{c}, \Vec{a}_{J_{\Vec{x}}}) \cdot P^{\mathcal{LA}}(\Vec{a}_{J_{\Vec{x}}}| \Vec{x}, \Vec{c}) \cdot P(\Vec{c}) \nonumber \\
    &= \sum_{\Vec{c}}\prod_{i \in F_{\Vec{x}}} \delta_{a_i,c_i}\cdot P^{\mathcal{LA}}(\Vec{a}_{J_{\Vec{x}}}|\Vec{x}_{J_{\Vec{x}}}, \Vec{c}) \cdot P(\Vec{c}),\label{LFDefiningEquation}
\end{align}
 where $F_{\Vec{x}} = \{ i \in I_F| \vec{x}_i =1  \} $,   $J_{\vec{x}} = (I_A \setminus F_{\Vec{x}})$,
 and $P^{\mathcal{LA}}(\Vec{a}_{J_{\Vec{x}}}|\Vec{x}_{J_{\Vec{x}}}, \Vec{c})$ are distributions satisfying Local Agency (Eq.~\eqref{LFlocalAgency2}).  }

\normalfont

We will take Eq.~\eqref{LFDefiningEquation} to be the defining mathematical condition of Local Friendliness behaviours in a canonical scenario. 
Let us then proceed by presenting some basic facts that hold generally for sets $\mathcal{LF}(S)$  compatible with Def.~\ref{LocalFriendliness}.

\theorem{The set $\mathcal{LF}(S)$ is a convex polytope in any standard LF scenario. \label{LFpolytopeTheorem}}

\begin{proof}
    This is a straightforward generalization of the proof of the case in ref.~\cite{Bong2020}. We will show that any behaviour compatible with Local Friendliness may be decomposed as a finite convex sum of its extreme points.
    
By definition the behaviour decomposes as in Eq. \eqref{LFDefiningEquation}. 

Let's constrain the behaviour to the set of inputs where none of the superobservers choose a query measurement. That is, to any string $\Vec{x}'\in M' = M \setminus \{\Vec{x}|$ $ \exists k \in I_F  $ s.t. $   x_k=1 \}$. Constrained on this set of inputs the behaviour in Eq. \eqref{LFDefiningEquation} decomposes simply as 
\begin{equation}
    \wp(\Vec{a}|\vec{x}') = \sum_{\Vec{c}}
                P^{\mathcal{LA}}(\vec{a}|\vec{x}'
, \vec{c})\cdot  P(\vec{c}) \label{NosignallingDecompositionOfLF} .
\end{equation}

It is also understood that the outputs $\Vec{a}$ are therefore also constrained to $O' = \{O_{\Vec{x}'}\}_{\Vec{x}' \in M'} $. For each fixed value of $\vec{c}$, the distributions P$^{\mathcal{LA}}(\vec{a}|\vec{x}'
, \vec{c})$ can be understood as elements in the no-signalling polytope $\mathcal{NS}(S_p')$, where $S_p' = (I_A, M',O')$ and there are no further constraints.  Therefore, the distributions in \eqref{NosignallingDecompositionOfLF} can be further expanded in terms of the extreme points as

\begin{equation}\label{extremepointsDecomposition}
    P^{\mathcal{LA}}(\vec{a}|\vec{x}',\vec{c}) = \sum_{l}
                P(l|\Vec{c}) \cdot  P_{EXT}^{\mathcal{NS}(S_p')}(\vec{a}|\vec{x}', l,\vec{c}),
\end{equation}
where $l$   labels the extreme points $ P_{EXT}^{\mathcal{NS}(S_p')}(\vec{a}|\vec{x}', l,\vec{c}) \in EXT (\mathcal{NS}(S_p')) $  of said no-signalling polytope 
and $ P(l|\Vec{c})$ is the associated convex weight. 

On the other hand, since the probabilities $ P^\mathcal{LA}(\vec{a}_{J_{\Vec{x}}}|\vec{x}_{J_{\Vec{x}}}, \vec{c})$ in Eq.~\eqref{LFDefiningEquation} must arise, due to Local Agency, as the marginals of a distribution in \emph{any} context where the substring $\vec{x}_{J_{\Vec{x}}}$ is present, it means that this must be the case in particular in the maximal contexts where none of the superobservers with a friend read their outcome. Therefore, the equality 
 \begin{align}  \label{eq:PLA_Jx}
     P^{\mathcal{LA}}(\vec{a}_{J_{\vec{x}}}|\vec{x}_{J_{\vec{x}}},\vec{c})  &= \sum_{a_i, i\in F_{x}}P^{\mathcal{LA}}(\vec{a}|\vec{x}', \vec{c}) 
     \end{align}
has to hold for all $\vec{x}, F_{\vec{x}}= \{ i\in I_A | \vec{x}_i =1   \}$, $ J_{\vec{x}}= (I_A\setminus F_{\vec{x}})$ and for all $\vec{x}'\in S'$ for which $\vec{x}'_{J_{\vec{x}}} = \vec{x}_{J_{\vec{x}}} $.

From Eqs.~\eqref{extremepointsDecomposition} and \eqref{eq:PLA_Jx}, the distributions $ P^{\mathcal{LA}}(\vec{a}_{J_{\vec{x}}}|\vec{x}_{J_{\vec{x}}},\vec{c})$ can be expressed as marginals of a convex sum of extreme points so that 
     \begin{align} 
     P^{\mathcal{LA}}(\vec{a}_{J_{\vec{x}}}|\vec{x}_{J_{\vec{x}}},\vec{c})
     &= \sum_{a_i, i \in F_{\vec{x}}} \sum_{l} P(l|\Vec{c}) \cdot  P_{EXT}^{\mathcal{NS}(S_p')}(\vec{a}|\vec{x}', l,\vec{c}) \nonumber \\
     &=\sum_{l} P(l|\Vec{c}) \cdot  P_{EXT}^{\mathcal{NS}(S_p')} \label{LocalAgencyInPolytopeProofMarginalEq}(\vec{a}_{J_{\vec{x}}}|\vec{x}_{J_{\vec{x}}}, l,\vec{c}) .  
    \end{align}
    
    It is useful to define  $\mu = (l, \Vec{c}$), $P(\mu) = P(l| \Vec{c})\cdot P(\Vec{c})$, so that when Eq.~\eqref{LocalAgencyInPolytopeProofMarginalEq} is put back to Eq.~\eqref{LFDefiningEquation} any behaviour can thus be given a finite convex context-dependent decomposition in terms of 
\begin{equation}\label{Eq.LFBehaviourPolytope}
    \wp(\Vec{a}|\vec{x}) = \sum_{\mu} P(\mu)
                P_{EXT}^{\mathcal{NS}(S_p')}(\vec{a}_{J_{\Vec{x}}}|\vec{x}_{J_{\Vec{x}}}, \mu)\cdot \left( \prod_{i\in F_{\Vec{x}}} \delta_{a_i , c_i(\mu)} \right).
\end{equation}
 The proof concludes by inspection of the fact that owing to the extremality of the elements of the no-signalling polytope and the distributions determined by $\Vec{c}$, this list of points not only identifies a unique point for each $\mu$, but also it exhaustively characterizes all extreme points of $\mathcal{LF}(S)$ compatible with the decomposition of Eq.~\eqref{LFDefiningEquation}.
\end{proof}
\normalfont

This method of proving Theorem \ref{LFpolytopeTheorem} has the advantage that it immediately identifies the form of the extreme points of $\mathcal{LF}(S)$. Finding the inequalities that bound the polytope $\mathcal{LF}(S)$ may then, at least in principle, be treated as an instance of a facet enumeration problem, where the vertices of the polytope are known (or can be solved) and may then be approached computationally. 
\normalfont
As an immediate corollary of the construction of Theorem \ref{LFpolytopeTheorem}, it is evident that if $I_F = \emptyset$ then $\mathcal{LF}(S) = \mathcal{NS}(S)$. Since such constraints may never be violated by correlations arising from quantum mechanics (or any post-quantum non-signalling theories), we will henceforth refer to such scenarios as trivial LF scenarios, because the assumption of Local Friendliness does not impose any conditions that can be plausibly expected to be experimentally violated. We report this result as a theorem for ease of reference.
\theorem{If $S$ is a standard LF scenario with $I_F = \emptyset$ then $\mathcal{LF}(S) = \mathcal{NS}(S).$ \label{LFequalNScorollary}}
\begin{proof}
    This follows straightforwardly from Eq.~\eqref{Eq.LFBehaviourPolytope} combined with the facts that if $I_F =\emptyset$, then $\mathcal{NS}(S_p') = \mathcal{NS}(S_p)$, $F_{\vec{x}} = \emptyset$ and $\Vec{x}_{J_{\Vec{x}}}=\Vec{x}$,  for all $\vec{x}$.
\end{proof}
\normalfont

\theorem{ The set $\mathcal{LF}(S)$ has

\begin{align}
   & |EXT(\mathcal{LF}(S))| 
    =  \left[\prod_{i \in I_F} |O_{x_i =1}|\right]\times |EXT(\mathcal{NS}(S_p')|
\end{align} extreme points. Here $S_p' = (I_A, M', O') $ with $M' = M\setminus \{\Vec{x}|$ $ \exists k \in I_F  $ s.t. $  x_k=1 \}$ and $O' = \{ O_{\vec{x}} \}_{\vec{x} \in M'}$. \label{LFextremepointCorollary}
\begin{proof}
    Follows immediately from the form of the extreme points in Theorem \ref{LFpolytopeTheorem}. This quantity may be identified as the number of different points $\mu = \Vec{c},l$. If $I_F = \emptyset$, the extreme points of $\mathcal{LF}(S)$ match those of $\mathcal{NS}(S)$, and the empty product $\prod_{i \in I_F} |O_{x_i =1}|$ is conventionally defined to equal 1. 
\end{proof}

\normalfont

\theorem{In any standard LF  scenario $S$ the inclusions
$\mathcal{B}(S) \subset \mathcal{LF}(S) \subset \mathcal{NS}(S)$ hold.
\label{LFpolytopeisSubsetSupersetBasic}
\begin{proof}
The second inclusion holds trivially by Def.~\ref{LA} of Local Agency and the decomposition of any $\wp(\Vec{a}|\Vec{x})$ that is implied by it as in Def.~\ref{LocalFriendlinessBehaviours}. Hence we will only prove the first inclusion  $\mathcal{B}(S) \subset \mathcal{LF}(S)$.

Let $\Vec{\alpha}$ denote a string determining the outcomes of all measurements of every party as in Theorem \ref{FinesJointDistribution} and $\alpha_{x_i}(\Vec{\alpha}) \in O_{x_i}$ the outcome of the $i$th party for the input $x_i$ in the string $\vec{\alpha}$. Let $U = \{ \alpha_{x_i=1}(\Vec{\alpha}) | i\in I_F \}$ and $V= U^c =  \{ \alpha_{x_i}(\Vec{\alpha}) : i \notin I_F \textrm{ or } x_i \neq 1 \}$.
By Theorem \ref{FinesTHRM} any $\wp(\Vec{a}|\Vec{x}) \in \mathcal{B}(S)$ can then be decomposed as
\begin{align}
    \wp(\Vec{a}|\Vec{x}) &= \sum_{\Vec{\alpha}} \left( \prod_{i \in I_A} \delta_{a_i, \alpha_{x_i} (\vec{\alpha})} \right) \cdot {P}(\vec{\alpha}).\\
     &= \sum_{\alpha_{x_i}(\Vec{\alpha})\in U} \left[\sum_{\alpha_{x_i}(\Vec{\alpha}) \in V} \left( \prod_{i \in I_A}  \delta_{a_i, \alpha_{{x_i}}(\vec{\alpha})} \right) \cdot {P}(\vec{\alpha}) \label{LHVareLFsecondline}\right] \\
     \begin{split}
     &=\sum_{\alpha_{x_i}(\Vec{\alpha})\in U} \prod_{i \in F_{\Vec{x}}} \delta_{a_i, \alpha_{x_i}(\Vec{\alpha}_U)}  \\& \hspace{0.5cm} \times  P^{\mathcal{LA}}(\Vec{a}_{J_{\Vec{x}}}  | \vec{x}_{J_{\Vec{x}}}, \Vec{\alpha}_U ) {P}(\Vec{\alpha}_U), \label{LHVareLFlastline}
     \end{split}
     \end{align}
where $a_i, \Vec{a}_{J_{\Vec{x}}}, x_i$ and $ \vec{x}_{J_{\Vec{x}}}$ are elements, or substrings, of $\Vec{a} $  and $\Vec{x}$ respectively with  $F_{\Vec{x}} = \{ i \in I_F| \vec{x}_i =1   \} $ and  $J_{\vec{x}} = (I_A \setminus F_{\Vec{x}})$ as before. To get from line \eqref{LHVareLFsecondline} to \eqref{LHVareLFlastline} the summation over elements in $V$ is performed, and the product over elements in $I_A$ is split into the two context dependent subsets. 

 Equation \eqref{LHVareLFlastline} has the desired structure. For completeness,  let us identify $\alpha_{x_i=1}(\Vec{\alpha}) =c_i$ for any $i \in I_F$. Then Eq.~\eqref{LHVareLFlastline} can be rewritten in a familiar form as 
     \begin{align}
    \wp(\Vec{a}|\Vec{x}) &= \sum_{\Vec{c}} \prod_{i \in F_{\vec{x}}}\delta_{a_i , c_i}\cdot   P^{\mathcal{LA}}(\vec{a}_{J_{\Vec{x}}}|\vec{x}_{J_{\Vec{x}}}  
, \vec{c})\cdot P(\vec{c}).
\end{align}
 Here $P(\Vec{c}) = {P}(\Vec{\alpha}_U)$ and so on. This is a model that fits Def.~\ref{LocalFriendlinessBehaviours} and hence $\mathcal{B}(S) \subset \mathcal{LF}(S)$.
\end{proof}

\corollary{$\mathrm{dim} (\mathcal{B}(S)) = \mathrm{dim} (\mathcal{LF}(S)) = \mathrm{dim} (\mathcal{NS}(S)).$\label{LFdimensionCorollary}}
\begin{proof}
    Follows straightforwardly from Theorem \ref{LFpolytopeisSubsetSupersetBasic} and the fact that $\mathrm{dim}(\mathcal{B}(S)) = \mathrm{dim}(\mathcal{NS}(S))$, owing to the fact that both sets are contained in the affine subspace constrained by the normalization and no-signaling constraints \cite{Pironio2005}.
\end{proof}

\theorem{If $S = (S_p, I_F)$ and $S' = (S_p, I_F' )$ are otherwise identical LF scenarios but with $I_F' \subset I_F$, then $\mathcal{LF}(S) \subset \mathcal{LF}(S')$. \label{basicLFinclusionTHRM}}

    \begin{proof}The proof is analogous but simpler than that of Theorem \ref{LFpolytopeisSubsetSupersetBasic}. Using the properties of Local Friendliness correlations in scenario $S$, one can see that if $\wp(\Vec{a}|\Vec{x}) \in \mathcal{LF}(S)$ then,
\begin{align}
    \wp(\Vec{a}|\Vec{x}) &= \sum_{\Vec{c}} \label{FirstLineofProof}
                P^{\mathcal{LA}}(\vec{a}, \vec{c}|\vec{x})\\
                &=  \sum_{c_k, k \in I_F'} \bigg[ \sum_{c_s, s \in (I_F \setminus I_F') }  P^{\mathcal{LA}}(\vec{a}, \vec{c}|\vec{x}) \bigg]          \\
                &= \sum_{c_k, k\in I_F'}\bigg[ P^{\mathcal{LA}}(\vec{a}, \label{NotationLine}\vec{c}_{I_F'}|\vec{x}) \bigg] \\
                &= \sum_{c_k, k\in I_F'}P^{\mathcal{LA}}(\vec{a}|\vec{x}, \label{LocalAgencyLine}\vec{c}_{I_F'})\cdot P(\vec{c}_{I_F'}).
\end{align}
Here, the sum over  $\vec{c}$ in Eq.~\eqref{FirstLineofProof} has been split into two sums over a bipartition of the set of agents $I_F$, namely $I_F'$ and $(I_F \setminus I_F')$. The equality between lines \eqref{NotationLine}-\eqref{LocalAgencyLine} holds owing to Local Agency. By writing $\vec{c}_{I_F'}=\vec{c}'$, $ F_{\vec{x}}' = \{ i \in I_F', x_i =1 \}$, $J_{\vec{x}}' = I_A \setminus F_{\vec{x}}'$, and using the fact that according to the protocol if $x_i =1$ then $P(\vec{a}_i|\vec{x},\vec{c}) = \delta_{a_i,c_i}$, which implies that Eq. \eqref{LocalAgencyLine} decomposes as
    \begin{align}
        \wp(\Vec{a}|\Vec{x}) &= \sum_{\Vec{c}'} 
            P^{\mathcal{LA}}(\vec{a}_{J_{\vec{x}}'} |\vec{x}_{J_{\vec{x}}'}  
, \vec{c}')\cdot \prod_{i\in F_{\Vec{x}'} } \delta_{a_i , c_i'}\cdot P(\vec{c}'). 
\end{align}
 This is a model compatible with LF in scenario $S'$ which proves the claim.
\end{proof}
\normalfont

In the next Section, we will further strengthen Theorem $\ref{basicLFinclusionTHRM}$ by identifying all the scenarios where the inclusion is strict. 

In ref.~\cite{Bong2020} the implications of Local Friendliness in the cases of $|I_A| = |I_F| = 2 $ were investigated. There it was recognized that the constraints of Def.~\ref{LocalFriendlinessBehaviours} define a convex polytope, for which in general it holds that $\mathcal{B}(S) \subset \mathcal{LF}(S) \subset \mathcal{NS}(S)$. The cases where the superobservers have  two settings with arbitrary number of outputs each and three settings with two outputs each were solved completely. In the case of two inputs the LF polytope equals the Bell polytope. In the case of three inputs the LF polytope does not equal the Bell polytope, leading to the notion of ``genuine Local Friendliness inequalities" \cite{Bong2020}, that is, facet-defining LF inequalities that are not facet-defining Bell inequalities for the same scenario. 

It also turns out (see acknowledgements of ref.~\cite{Bong2020}) that similar geometric objects had been previously independently investigated in the appendix of the PhD thesis of Erik Woodhead \cite{Woodhead2014} under the name of ``partially deterministic polytopes'', where they emerged as conditions for certifying randomness in the presence of an adversary constrained by no-signalling. 
As a corollary\footnote{The bipartite partial deterministic polytopes of \cite{Woodhead2014} are slightly more general than the LF polytopes in canonical scenarios. In \cite{utrerasalarcón2023} it was shown that the more general partially deterministic polytopes can be the objects of interest for Local Friendliness in the so-called sequential scenarios. In forthcoming work \cite{HaddaraWisemanCavalcanti} we will explore partially deterministic polytopes and some of their physical significance in full generality. For now, one may think of the polytopes $\mathcal{LF}(S)$ of this work as examples of specific forms of partial determinism, and among our main novel results are the mathematical properties of multipartite extensions of these objects.} of Woodhead's results \cite{Woodhead2014}, $\mathcal{B}(S)=\mathcal{LF}(S)$ in any bipartite canonical scenario $S$ where one of the parties has a friend and only two settings, that is, if $\exists i \in I_F$ s.t.~$|M_i|=2$ (regardless of the number of settings of the other party). 

Beyond the investigations from refs.~\cite{Bong2020,Woodhead2014}, we are not aware of results on general behaviours compatible with Local Friendliness. In particular, it seems that, aside from the tripartite scenario with $I_F=I_A=3$ and two dichotomic measurements per party, discussed in ref.~\cite{Ding2023}\footnote{In ref.~\cite{Ding2023}, a Bell inequality for that scenario is claimed, without rigorous proof, to follow from Local Friendliness. We return to this matter  in Sec. \ref{DiscussionSection}.}, 
the constraints implied by Local Friendliness in multipartite scenarios have so far been left unexplored. In Theorems \ref{LFpolytopeTheorem}-\ref{LFpolytopeisSubsetSupersetBasic}, \ref{basicLFinclusionTHRM} and Corollary \ref{LFdimensionCorollary}, we have established basic facts that hold for the sets $\mathcal{LF}(S)$ in arbitrary canonical LF scenarios, including the fact that these sets may always be viewed as convex polytopes.

 As is known from cases of solving the inequalities bounding the Bell polytopes, e.g.~\cite{Pitowsky1989, Pitowsky2001} and the extreme points of the no-signalling polytopes \cite{Pironio2011}, finding complete solutions to these problems quickly becomes computationally hard as the number of parties, inputs and outputs are increased. Combined with the fact that there are cases where $\mathcal{LF}(S)$ equals the LHV polytope $\mathcal{B}(S)$, this motivates us to analytically explore the polytopes $\mathcal{LF}(S)$, identifying in which cases they equal the corresponding Bell polytope and cases where new inequalities might be found, thereby pruning the computational search problem. In the next Section we present the main results of this work, which resolve this question completely.

\section{Results: Characterization of LF polytopes \label{ResultsSection}}

\begin{theorem}\label{N-input-forAlice-two-inputOthersThreom1Charliemissing} If in a canonical Local Friendliness scenario $|I_F|=|I_A|-1$ and  $\forall j \in I_F$,  $|M_j| = 2$, then $ \mathcal{LF}(S) = \mathcal{B}(S)$.
\end{theorem}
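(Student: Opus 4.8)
The plan is to prove the nontrivial inclusion $\mathcal{LF}(S)\subset\mathcal{B}(S)$, since $\mathcal{B}(S)\subset\mathcal{LF}(S)$ already holds in full generality by Theorem~\ref{LFpolytopeisSubsetSupersetBasic}. As both sets are convex polytopes, it suffices to show that every extreme point of $\mathcal{LF}(S)$ lies in $\mathcal{B}(S)$. I would take the explicit description of these vertices furnished by Theorem~\ref{LFpolytopeTheorem}: each is of the form in Eq.~\eqref{Eq.LFBehaviourPolytope}, labelled by $\mu=(l,\vec{c})$, carrying a factor $\prod_{i\in F_{\vec{x}}}\delta_{a_i,c_i(\mu)}$ that fixes the query outcomes and a factor $P_{EXT}^{\mathcal{NS}(S_p')}(\vec{a}_{J_{\vec{x}}}|\vec{x}_{J_{\vec{x}}},\mu)$ drawn from the extreme points of the reduced no-signalling polytope $\mathcal{NS}(S_p')$.

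The key observation is what the hypotheses do to the reduced scenario $S_p'=(I_A,M',O')$, where $M'=M\setminus\{\vec{x}\mid \exists k\in I_F \text{ s.t. } x_k=1\}$. Since every friend-party $j\in I_F$ has only the two settings $\{1,2\}$ and all strings containing a query are deleted, each such party retains exactly one available setting (namely $x_j=2$) in $S_p'$; because $|I_F|=|I_A|-1$, only the unique friendless party keeps all of its settings. I would therefore establish the following lemma: in any no-signalling scenario in which all but one party are restricted to a single setting, one has $\mathcal{NS}=\mathcal{B}$. This is proved by grouping the single-setting parties into one composite party $B$ (whose joint outcome is the tuple of their individual outcomes) and the remaining party into $A$, yielding an effective bipartite scenario in which $B$ has a single setting. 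For such a behaviour $P(\vec{b},a_A|x_A)$ the no-signalling condition of Def.~\ref{No-signallingbehaviour} forces the marginal $P(\vec{b})$ to be independent of $x_A$, and the explicit local model with hidden variable $\lambda=\vec{b}$, weight $P(\vec{b})$, deterministic $\delta$-response on the $B$-side and response $P(a_A\mid x_A,\vec{b})$ on the $A$-side reproduces it; since the $B$-response factorises across the individual single-setting parties, this is a bona fide local model in the sense of Def.~\ref{LHVbehaviours}. Hence $\mathcal{NS}(S_p')=\mathcal{B}(S_p')$, so the two polytopes share their vertices, and by the equivalence \ref{FinesLHVmodelable}$\Leftrightarrow$\ref{FinesDeterministicDistribution} of Fine's theorem these common vertices are exactly the deterministic behaviours. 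Thus every $P_{EXT}^{\mathcal{NS}(S_p')}$ assigns a definite outcome to each setting of each party as a function of $l$.

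It then remains only to reassemble. Substituting a deterministic $P_{EXT}^{\mathcal{NS}(S_p')}$ into Eq.~\eqref{Eq.LFBehaviourPolytope}, a generic $\mathcal{LF}$-vertex collapses to a single product of Kronecker deltas over all parties: friend-party $i$ outputs $c_i(\mu)$ when $x_i=1$ and its $S_p'$-value $\beta_i(l)$ when $x_i=2$, while the friendless party outputs its $S_p'$-value for each of its settings. All of these are well-defined functions of $\mu$, one outcome per party--setting pair, so the vertex has exactly the deterministic separable form \eqref{FinesDeterministicEquation}. By Theorem~\ref{FinesTHRM} it is therefore LHV-modelable, i.e.\ lies in $\mathcal{B}(S)$. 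Since all vertices of $\mathcal{LF}(S)$ lie in the convex set $\mathcal{B}(S)$, we conclude $\mathcal{LF}(S)\subset\mathcal{B}(S)$, and hence equality.

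I expect the main obstacle to be the lemma together with the correct bookkeeping of $S_p'$: one must argue carefully that deleting the query strings leaves each friend-party with a single setting (this is precisely where $|M_j|=2$ is used) and that $|I_F|=|I_A|-1$ guarantees exactly one party survives with genuine measurement choice, so the reduced scenario is of the ``single-setting-except-one'' type for which $\mathcal{NS}=\mathcal{B}$. Everything downstream is a routine substitution, but the whole argument hinges on the determinism of the reduced no-signalling vertices, which is the crux.
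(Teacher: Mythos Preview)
Your argument is correct, and it takes a genuinely different route from the paper's. The paper works with an arbitrary behaviour $\wp\in\mathcal{LF}(S)$ and, for each fixed $\vec c$, directly builds a joint distribution over all outcomes in the non-query contexts $\vec x\in\{1,\dots,n\}\times\{2\}^{N-1}$ via the explicit product--quotient formula
\[
P(\vec\alpha\mid\vec c)\;=\;\frac{\prod_{s=1}^{n}P^{\mathcal{LA}}(a_{x_1=s},a_{x_2=2},\dots,a_{x_N=2}\mid\vec c)}{\bigl[P^{\mathcal{LA}}(a_{x_2=2},\dots,a_{x_N=2}\mid\vec c)\bigr]^{\,n-1}},
\]
then checks normalisation and marginals and invokes Fine's theorem. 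Your proof instead exploits the polytope structure: the hypotheses force each friend-party to retain a single setting in the reduced scenario $S_p'$, so $\mathcal{NS}(S_p')=\mathcal{B}(S_p')$ by your elementary ``one multi-setting party'' lemma, whence every $P_{EXT}^{\mathcal{NS}(S_p')}$ is deterministic and every $\mathcal{LF}(S)$-vertex is a deterministic behaviour in $\mathcal{B}(S)$. This is conceptually cleaner and avoids the explicit joint construction entirely; what the paper's approach buys is a closed-form LHV model directly from the LF decomposition of \emph{any} behaviour, without first passing through a vertex decomposition of $\wp$ --- a point the authors emphasise as a feature. Both arguments hinge on the same structural fact (in the non-query contexts only one party has a nontrivial choice), but you isolate it as the statement $\mathcal{NS}(S_p')=\mathcal{B}(S_p')$, whereas the paper encodes it implicitly in the well-definedness of the quotient formula.
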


\begin{proof} Suppose, without loss of generality, that the first superobserver does not have a friend and has $n$ measurement settings $x_1\in\{1 \ldots n\}$, while each of the N-1 remaining superobservers has a friend and 2 measurement settings with labels $x_{i\neq 1}\in\{1,2\}$. In any given run of the experiment, given the string of inputs $\vec{x}$, let $F_{\vec{x}}  = \{ i \in I_F| \vec{x}_i =1  \} $  represent the context-dependent subset $F_{\vec{x}} \subset I_F =$ ($I_A \setminus \{1 \})$ of superobservers with a friend who choose $x_i = 1$. Then any behaviour $\wp(\Vec{a}|\Vec{x})$ with $\Vec{x} \in \{1 \ldots n \} \times \{1,2\}^{N-1}$ consistent with Local Friendliness may be decomposed as
\begin{align}
    \wp(\Vec{a}| {\vec{x}}) =\sum_{\Vec{c}} P^{\mathcal{LA}} (\vec{a}_{J_{\vec{x}}}|\vec{x}_{J_{\vec{x}}}, \vec{c})\cdot \prod_{i\in F_{\vec{x}}}\delta_{a_i , c_i} \cdot P(\vec{c}), \label{LHVproofdecomposition}
\end{align}
 with  $J_{\vec{x}} = (I_A \setminus F_{\Vec{x}})$.
 
 Note that the deterministic distributions $\prod_{i\in F_{\vec{x}}}\delta_{a_i , c_i}$ are already consistent with a LHV model, as phrased in Def.~\ref{LHVbehaviours}, for any $\vec{x}$. On the other hand, owing to Local Agency, the distributions  $P^{\mathcal{LA}}(\vec{a}_{J_{\vec{x}}}|\vec{x}_{J_{\vec{x}}}, \vec{c})$ in Eq.~\eqref{LHVproofdecomposition} are to be recovered as marginals from distributions $P^{\mathcal{LA}}(\Vec{a}|\Vec{x}, \Vec{c})$ in the contexts where $J_{\vec{x}} = I_A$, that is where  $\Vec{x} \in \{1, \ldots , n\} \times \{2\}^{N-1}$. Therefore, as shown in detail below,  if this set of $n$ distributions for each $\vec{c}$ can be given an LHV model then the whole behaviour $ \wp(\Vec{a}| {\vec{x}})$ has an LHV model. 
 
 By Theorem \ref{FinesTHRM}, the existence of an LHV model for the set of n distributions specified above is equivalent to the existence of a joint probability distribution $P(\vec{\alpha}|\vec{c})$ which recovers the distributions $P^{\mathcal{LA}}(\Vec{a}|\Vec{x}, \Vec{c})$ with  $\Vec{x} \in \{1, \ldots , n\} \times \{2\}^{N-1}$ as marginals:
 
 \begin{equation}
\begin{aligned}
   P(\vec{\alpha}_{\vec{x}} = \vec{a}_{\vec{x}}|\vec{c}) = P^{\mathcal{LA}}(\Vec{a}|\Vec{x}, \Vec{c}),
    \end{aligned}
\end{equation} with with $P(\vec{\alpha}_{\vec{x}}|\vec{c}) = \sum_{\alpha_{x_i'}: x_i' \neq x_i} {P}( \Vec{\alpha}|\vec{c})$, in analogy with the convention of Theorem \ref{FinesJointDistribution}.

We will show that such a joint distribution exists.

It will be useful to employ the notation $P^{\mathcal{LA}}(\Vec{a}|\Vec{x}, \Vec{c}) = P^{\mathcal{LA}}(a_{x_1=s}, a_{x_2=2} \ldots, a_{x_N=2)} |\Vec{c})$, with $s \in \{1, \ldots n \}$ corresponding to $\Vec{x} \in \{1, \ldots , n\} \times \{2\}^{N-1}$.  Now due to Local Agency, marginalizing over superobserver 1 in $P^{\mathcal{LA}}(\Vec{a}|\Vec{x}, \Vec{c})$ produces the same distribution  $P^{\mathcal{LA}}( a_{x_2=2}, \ldots , a_{x_N=2}| \Vec{c})$ for all $x_1$. Inspired by the construction presented in the appendix of ref.~\cite{Wolf2009} (see also \cite{Fine1982b}), consider then the assembly

\begin{equation}
\begin{aligned} \label{massiveJoint}
     P(\alpha_{x_1=1},&\alpha_{x_1=2} , \ldots , \alpha_{x_1=n} , \alpha_{x_2= 2}, a_{x_3=2}, \ldots ,\alpha_{x_N=2}| \vec{c}) \\ :=& {P}(\Vec{\alpha}|\vec{c}) \\  =& \dfrac{\prod_{s=1}^{n} P^{\mathcal{LA}}(a_{x_1=s}, a_{x_2=2}, a_{x_3=2}, \ldots, a_{x_N=2 } |\Vec{c})}{[P^{\mathcal{LA}}( a_{x_2=2}, \ldots , a_{x_N=2} | \Vec{c})]^{n-1}} \textrm.
    \end{aligned}
\end{equation} 

Here ${P}(\Vec{\alpha}|\vec{c}) $ is set to zero whenever the divisor on the right hand side equals zero.

Equation \eqref{massiveJoint} defines a valid probability distribution because 

1) ${P}(\Vec{\alpha}|\vec{c}) \geq 0 $ $\forall \vec{\alpha}$, since it is defined by multiplying and dividing probabilities. \\

2) $\sum_{\vec{\alpha}} {P}(\Vec{\alpha}|\vec{c}) =1.$ This may perhaps be easiest seen by considering summation over the outcomes of all settings for the first party: 
\begin{equation}
    \begin{aligned}
        \sum_{a_{x_1=1},\ldots , a_{x_1=n}} {P}(\Vec{\alpha}|\Vec{c}) &  \equiv \dfrac{ [P^{\mathcal{LA}}( a_{x_2=2}, \ldots , a_{x_N=2} | \Vec{c})]^{n}}{[P^{\mathcal{LA}}( a_{x_2=2}, \ldots , a_{x_N=2} | \Vec{c})]^{n-1}} \\ &= P^{\mathcal{LA}}( a_{x_2=2}, \ldots , a_{x_N=2)} | \Vec{c})
    \end{aligned}
\end{equation}
where the equivalence following the summation holds true by appeal  to Local Agency, and hence summing over remainining agents amounts to 1. 

In an analogous manner one may confirm that the condition leading to Fine's theorem of being able to recover the appropriate $P^{\mathcal{LA}}(\Vec{a} |\Vec{x},  \Vec{c}) $ for any $\vec{x} \in \{ 1 \ldots n \} \times \{ 2\}^{N-1}$ from $ {P}(\Vec{\alpha}|\vec{c})$ as marginals is satisfied. Namely, by summing over the inputs of superobserver 1 not corresponding to the choice $s$ in a particular $\Vec{x}$.

The proof concludes by noting that the behaviour $\wp(\vec{a}|\vec{x})$ can now be admitted a deterministic LHV expansion of the kind as in Theorem \ref{FinesDeterministicDistribution} by setting  $\lambda =(\vec{\alpha}, \vec{c}),  P(\lambda) := {P}(\Vec{\alpha}|\vec{c}) P(\Vec{c})$. This can  now be understood as a distribution over LHV's which determine the outcomes of the entire experiment, that is one may express
\begin{equation}
    \begin{aligned}
        \wp(\Vec{a}| {\Vec{x}}) = \sum_{\lambda} \prod_{i \in I_A }\delta_{a_i, \alpha_{x_i}(\lambda) } P(\lambda) \label{LHVforLFequation}
    \end{aligned}
\end{equation}
with $\alpha_{x_i}: \Lambda \rightarrow O_{x_i}$  so that if $i\in I_F$ and $x_i =1$  the outcome $a_i(\vec{\alpha}, \vec{c})$ is determined  by $\Vec{c}$, and in every other case by  $\Vec{\alpha}$.
\end{proof}

\normalfont
As it turns out, Theorem \ref{N-input-forAlice-two-inputOthersThreom1Charliemissing} is the most general positive result that may be established with regards to the existence of LHV models for behaviours compatible with Local Friendliness. In Section \ref{DiscussionSection} we provide simple examples of this construction in action. For now, we proceed with exploring the LF polytopes. Before completing the picture with the converse (or negative) results, we establish an immediate corollary of this theorem.

\corollary{\label{N-input-forAlice-two-inputOthersThreom} In any canonical LF scenario, if $I_A = I_F$ and  $|M_i| = 2$ for $i$ ranging in at least $|I_A|-1 $ parties, then $ \mathcal{LF}(S) = \mathcal{B}(S)$.}
\begin{proof} Follows immediately from Theorems \ref{N-input-forAlice-two-inputOthersThreom1Charliemissing}, \ref{basicLFinclusionTHRM} and \ref{LFpolytopeisSubsetSupersetBasic}. For completeness, we demonstrate how an LHV model may be built analogously to the case of Theorem \ref{N-input-forAlice-two-inputOthersThreom1Charliemissing}.
    The construction follows the recipe of Theorem \ref{N-input-forAlice-two-inputOthersThreom1Charliemissing} so we may be substantially briefer here. Suppose again that agent 1 has $n$ inputs. The subset $F_{\vec{x}}$ in the proof of Theorem \ref{N-input-forAlice-two-inputOthersThreom1Charliemissing} may now be any subset of $I_A$ and hence it is sufficient to provide an LHV model for the $n-1$ distributions of the form $P^{\mathcal{LA}}(\Vec{a}|{\Vec{x}}, \Vec{c})$, where now $\Vec{x} \in \{2, \ldots , n\} \times \{2\}^{N-1}$. This is a smaller set than in the case of Theorem \ref{N-input-forAlice-two-inputOthersThreom1Charliemissing} though the scenario is otherwise identical. Consider then the joint distribution 
        \begin{equation}
\begin{aligned} \label{massiveJoint2}
    P(\vec{\alpha}|\vec{c})   := \dfrac{\prod_{s=2}^{n} P^{\mathcal{LA}}(a_{x_1=s}, a_{x_2=2}, a_{x_3=2 } \ldots a_{x_N=2 } |\Vec{c})}{[P^{\mathcal{LA}}( a_{x_2,2}, \ldots , a_{x_N=2} | \Vec{c})]^{n-2}},
    \end{aligned}
\end{equation} 
   which can be verified to recover the correct distributions $P^{\mathcal{LA}}(\Vec{a}|{\Vec{x}}, \Vec{c})$ with $\Vec{x} \in \{2, \ldots , n\} \times \{2\}^{N-1}$ as marginals, analogously to the proof of Theorem \ref{N-input-forAlice-two-inputOthersThreom1Charliemissing}. Then, analogously to Eq.~\eqref{LHVforLFequation} one can identify $\lambda = (\vec{\alpha}, \vec{c})$, $P(\lambda) = P(\vec{\alpha}|\vec{c})P(\vec{c})$ and write 

   \begin{align}
       \wp(\Vec{a}| {\Vec{x}}) = \sum_{\lambda} \prod_{i \in I_A }\delta_{a_i, \alpha_{x_i}(\lambda) } {P} (\lambda),
   \end{align}
   with $\alpha_{x_i}: \Lambda \rightarrow O_{x_i}$. 
\end{proof}

\normalfont

Intuitively Corollary \ref{N-input-forAlice-two-inputOthersThreom}  is a special case of Theorem \ref{N-input-forAlice-two-inputOthersThreom1Charliemissing} because the modification to the statement brings more constraints that bring the LF behaviour already closer to an LHV model.

\lemma{\label{ReducedPartiesLHV}Let $S'$ be a scenario that can be obtained from scenario $S$ by excluding some subset $E\subset I_A$ of parties.  If $\wp^{S}(\vec{a}|\vec{x}) \in \mathcal{NS}(S)$, then the behaviour $\wp^{S'}(\vec{a}'|\vec{x}')$ obtained from $\wp^{S}(\vec{a}|\vec{x})$ by the mapping
\begin{align}
\wp^{S}(\vec{a}|\vec{x}) \mapsto \wp^{S'}(\vec{a}'|\vec{x}')=\sum\limits_{a_i, i\in E} \wp^{S}(\Vec{a}|\Vec{x}) 
\end{align}
is a valid No-signaling behaviour in scenario $S'$. Furthermore if $\wp^{S}(\Vec{a}|\Vec{x})\in \mathcal{B}(S)$ then $\wp^{S'}(\vec{a}'|\vec{x}') \in \mathcal{B}(S').$
}
\begin{proof}
    Clearly $\wp^{S'}(\vec{a}'|\vec{x}') \in \mathcal{NS}(S')$, since the marginals are well defined owing to no-signaling. Therefore it is sufficient to show that $\wp^{S}(\vec{a}|\vec{x}) \in \mathcal{B}(S) \Rightarrow 
 \wp^{S'}(\vec{a}'|\vec{x}') \in \mathcal{B}(S')$. This is seen simply by noting that by hypothesis  $\wp^{S}(\vec{a}|\vec{x})$ admits an LHV expansion as in Eq.~ \eqref{Bell-localmaths} 
 \begin{align}
     \wp^{S}(\vec{a}|\vec{x}) = \int_{\Lambda} d\lambda \, p(\lambda) \left[\prod_{i=1}^N P(a_i|x_i, \lambda)\right].
 \end{align}
 Therefore, owing to normalization of the $P(a_i|x_i,\lambda)$ above, one gets an LHV model for $\wp^{S'}(\vec{a}'|\vec{x}')$ induced by the original one for the scenario $S'$ as well.
\end{proof}

\lemma{\label{ReducedMeasurementsLHV}Let $S'$ be a scenario that can be obtained from scenario $S$ by restricting the measurements $x_i$ of the parties $i\in I_A$ to some non-empty subsets $M_i' \subset M_i$. Then the behaviour $\wp^{S'}(\vec{a}'| \vec{x}')$ obtained from $\wp^{S}(\vec{a}|\vec{x})$ by the  `identity mapping'
\begin{align}
    \wp^{S'}(\vec{a}'| \vec{x}') = \wp^{S}(\vec{a}|\vec{x}) \hspace{0.5cm} \mathrm{iff} \hspace{0.5cm} \delta_{\vec{a},\vec{a}'}\cdot \delta_{\vec{x}, \vec{x}'}=1,
\end{align}
is a valid behaviour in scenario $S'$. Here $\vec{x}' \in M'$  is understood to represent the strings of inputs the elements of which are restricted to taking values only in the $M_i'$'s. If also $\wp^{S}(\vec{a}|\vec{x}) \in \mathcal{B}(S)$, then $\wp^{S'}(\vec{a}'|\vec{x}') \in \mathcal{B}(S')$.
}
\begin{proof}
    Clearly $\wp^{S'}(\vec{a}'|\vec{x}')$ obtained this way is a valid behaviour in $S'$. To see  that $\wp^{S}(\vec{a}|\vec{x}) \in \mathcal{B}(S) \Rightarrow \wp^{S'}(\vec{a}|\vec{x}') \in \mathcal{B}(S')$, it is sufficient to note that by  restricting the behaviour $\wp^{S}(\vec{a}|\vec{x})$ to only include the distributions conditioned on subsets of inputs $\vec{x}' \in M'$, one is essentially looking for an LHV model to a strictly smaller set of distributions than in the original case. Hence, the existence of an LHV model for the behaviour in scenario $S$, guarantees one for $S'$. 
\end{proof}
\normalfont

Lemmas \ref{ReducedPartiesLHV} and \ref{ReducedMeasurementsLHV} are admittedly trivial. However, it turns out that they can be used to establish some facts about the sets $\mathcal{LF}(S)$ in general. Let us first use them to point out yet another way to prove that Corollary \ref{N-input-forAlice-two-inputOthersThreom}  indeed follows from Theorem \ref{N-input-forAlice-two-inputOthersThreom1Charliemissing}. Namely, if $\wp(\Vec{a}|\Vec{x})$ is any behaviour in a scenario with $I_A = I_F$ where $M_i = \{1,2 \}$ for all but one $i$, say $i=1$ for which $M_1 = \{1 \ldots n \}$, then $\wp(\Vec{a}|\Vec{x})\cdot \wp (a_{N+1}|x_{N+1}) $ with $x_{N+1} \in M_{N+1} = \{1 , 2\}$ may be thought of as a valid behaviour in a scenario compatible with the premise of Theorem \ref{N-input-forAlice-two-inputOthersThreom1Charliemissing}, and hence admits an LHV model. By lemma \ref{ReducedPartiesLHV}, by marginalizing over superobserver $N+1$, so does $\wp(\Vec{a}|\Vec{x})$ as well. 

\normalfont
More than that, we can use lemmas \ref{ReducedPartiesLHV}-\ref{ReducedMeasurementsLHV} aided with simple observations to establish quite effortlessly our previous claim that the conditions of Theorem \ref{N-input-forAlice-two-inputOthersThreom1Charliemissing} are in fact necessary for the existence of an LHV model. 

\theorem{ \label{GenuineLFgenerally} The scenarios covered by Theorem \ref{N-input-forAlice-two-inputOthersThreom1Charliemissing} and Corollary \ref{N-input-forAlice-two-inputOthersThreom} are the only ones where  $\mathcal{LF}(S) = \mathcal{B}(S)$. }
\begin{proof}
   Notably, the situations in which the assumptions leading to Theorem \ref{N-input-forAlice-two-inputOthersThreom1Charliemissing} or Corollary \ref{N-input-forAlice-two-inputOthersThreom}  can fail may be grouped into three distinct classes;$|I_A| - |I_F| \geq 2$, two or more agents have $|M_i| \geq 3$ and  $|I_F|=|I_A| - 1$ and all but one superobserver have two measurements  but the superobserver with $n$ inputs has a friend.   We approach the proof by case by case analysis, though the argument is similar in each instance.    \label{NoLHVtheorem}
    \begin{enumerate}[label=\roman*), ref=\ref{NoLHVtheorem}.\roman* ]
        \item $|I_A| - |I_F| \geq 2$ \label{NoLHVtheoremCase1}

        Suppose $\wp(\Vec{a}|\vec{x}) \in \mathcal{LF}(S)$ and agents $s,t \in I_A$ are without a friend. Then, by hypothesis the behaviour admits a decomposition of the form 
        \begin{equation}
            \wp(\Vec{a}|\vec{x}) = \sum_{\vec{c}}
                P^{\mathcal{LA}}(\vec{a}_{J_{\vec{x}}}|\vec{x}_{J_{\vec{x}}}, \vec{c})\cdot \prod_{i \in F_{\vec{x}}} \delta_{a_i , c_i}\cdot P(\vec{c}) ,\label{standardLFdecompositioninTheorem}
        \end{equation}
       with  $F_{\vec{x}}  = \{ i \in I_F| \vec{x}_i =1  \} $ and $J_{\vec{x}} = I_A \setminus F_{\vec{x}}$. In particular  $s,t \notin I_F$ so that $x_s, x_t$ are always in the string $\vec{x}_{J_{\vec{x}}}$ in Equation \eqref{standardLFdecompositioninTheorem}. This means that the family of behaviours which expand, for $K_{\vec{x}} = J_{\vec{x}} \setminus \{s,t \}$, as
        \begin{equation}\label{ExampleofGenuineLFbehaviour}
        \begin{split}
            \wp(\Vec{a}|{\Vec{x}}) &= \sum_{\Vec{c}} P^{\mathcal{LA}}(\Vec{a}_{K_{\vec{x}}}|\Vec{x}_{K_{\vec{x}}}, \Vec{c})\cdot P^{\mathcal{LA}}(a_s, a_t |x_s,x_t) \\
            & \hspace{0.5cm} \times \prod_{i \in F_{\vec{x}}} \delta_{a_i , c_i}\cdot P(\vec{c}),
            \end{split}
        \end{equation}
         are perfectly compatible with the model of Eq.~\eqref{standardLFdecompositioninTheorem}. Now let us suppose, for the sake of demonstrating a contradiction, that the behaviour in Eq.~\eqref{ExampleofGenuineLFbehaviour} were LHV modelable. Then, by appeal to Lemma \ref{ReducedPartiesLHV}, the marginalized set of distributions $P^{\mathcal{LA}}(a_s, a_t |x_s,x_t)$ would also be LHV modelable. We could stop here, since it is well known that $|M_s|, |M_t| \geq 2$ combined with Local Agency is not sufficient to guarantee LHV modelability. For completeness, we demonstrate this fact with an example. Suppose that for all inputs $x_s,x_t$ there is only a pair of outcomes, $a_t,a_s \in \{1,2\}$ say,  with nonzero probability. Then 
         \begin{align}\label{LAexpansiontoPR}
         \begin{split}
             P^{\mathcal{LA}}(&a_s, a_t|x_s,x_t)  \\ 
             & = \begin{cases}
                 P_{PR}^{\mathcal{LA}}(a_s, a_t |x_s,x_t) & a_s,a_t,x_s, x_t \in \{1,2\}\\
                 P_{PR}^{\mathcal{LA}}(a_s|x_s)\cdot P_{PR}^{\mathcal{LA}}(a_t|x_t) & \mathrm{otherwise},
             \end{cases}
             \end{split}
         \end{align}
         is an example of a valid expansion of $P^{\mathcal{LA}}(a_s, a_t |x_s,x_t).$
         Here $P_{PR}^{\mathcal{LA}}(a_s, a_t |x_s,x_t)$ refers to the Popescu-Rohrlic- type correlations \cite{Popescu1994} which are 
        non-signalling but are neither LHV- or quantum modellable. By Lemma \ref{ReducedMeasurementsLHV}, Eq.~\eqref{LAexpansiontoPR} can be further mapped to contain just the inputs $x_s,x_t \in \{1,2\}$, preserving LHV modelability.  The proof concludes by noting that if an LHV model did exist for all behaviour satisfying Eq.~\eqref{standardLFdecompositioninTheorem}, then by Lemmas \ref{ReducedPartiesLHV} and \ref{ReducedMeasurementsLHV} so, in particular,  would an LHV model for $P_{PR}^{\mathcal{LA}}(a_s, a_t |x_s,x_t)$, which is a contradiction.   
        
        \item Two or more agents have $|M_i| \geq 3$ \label{NoLHVtheoremCase2}

        This may be proved analogously to the previous case of Theorem \ref{NoLHVtheoremCase1}. Now nothing prevents some pair of agents $(s,t)$ of having PR-correlations for some pairs of their measurements $x_s, x_t \neq 1$, leading to contradiction.
        
        \item \label{NoLHVtheoremCase3} $|I_F|=|I_A| - 1$ and all but one superobserver have two measurements  but the superobserver with $n$ inputs has a friend. 

        Again, now the agent $s$ with $n$ measurements and a friend may share PR correlations for some pair of her measurements $x_s \neq 1$ with the agent $t$ without a friend. 
    \end{enumerate}
\end{proof}

\normalfont
The proof of Theorem \ref{NoLHVtheorem} although simple, can be used to extract information about the relation of the sets $\mathcal{Q}(S)$ and $\mathcal{LF}(S)$ as well. Namely, because similar statements as in lemmas \ref{ReducedPartiesLHV} and \ref{ReducedMeasurementsLHV}   hold trivially on the quantum set $\mathcal{Q}(S)$ as well, it means that by an adaptation of the proof of Theorem \ref{NoLHVtheorem} one may establish that in general $\wp(\Vec{a}|\Vec{x}) \in \mathcal{LF}(S) \nRightarrow  \wp(\Vec{a}|\Vec{x})  \in \mathcal{Q}(S)$,  except precisely in the class of scenarios covered by Theorem \ref{N-input-forAlice-two-inputOthersThreom1Charliemissing} and corollary \ref{N-input-forAlice-two-inputOthersThreom} where $\mathcal{LF}(S) = \mathcal{B}(S)$. In Section \ref{DiscussionSection} we show a simple argument which demonstrates that whenever $I_F \neq \emptyset$, then also $\wp(\Vec{a}|\Vec{x}) \in \mathcal{Q}(S) \nRightarrow \wp(\Vec{a}|\Vec{x}) \in \mathcal{LF}(S)$, which means that the sets $\mathcal{Q}(S)$ and $\mathcal{LF}(S)$ have in general disjoint regions, so that neither set is a subset of the other. The two extreme cases which make an exception are the scenarios covered by Theorem \ref{N-input-forAlice-two-inputOthersThreom1Charliemissing} where $\mathcal{B}(S) = \mathcal{LF}(S) \subsetneq \mathcal{Q}(S)$ and the scenarios where $I_F = \emptyset$ where $\mathcal{Q}(S) \subsetneq \mathcal{NS}(S) = \mathcal{LF}(S)$.

The construction used in the proof of Theorem \ref{NoLHVtheoremCase1} will be useful for demonstrating our following conclusions as well, and we will henceforth be appealing to such examples with less rigour. Armed with the results presented so far, we are now also able to say something about the relations of the sets of behaviours $\mathcal{LF}(S)$ and $\mathcal{LF}(S')$ in cases where $I_A = I_{A'}, M = M'$ and $\{O_{\Vec{x}} \} = \{O_{\Vec{x}'}\}$.

\theorem{\label{LFinclusionTheorem}If $S=(S_p, I_F )$ and $S'=(S_p,I_F')$ are otherwise identical scenarios but where $\emptyset \neq I_F' \subsetneq I_F$ then either
$\mathcal{LF}(S) = \mathcal{LF}(S') = \mathcal{B}(S_p)$ or $\mathcal{LF}(S) \subsetneq \mathcal{LF}(S')$ }
\begin{proof}

    If the conditions of Theorem \ref{N-input-forAlice-two-inputOthersThreom1Charliemissing} or Corollary \ref{N-input-forAlice-two-inputOthersThreom} hold for $S$ and $S'$, then clearly the first claim holds true and $\mathcal{B}(S_p)= \mathcal{LF}(S) = \mathcal{LF}(S')$.  On the other hand  by Theorem \ref{NoLHVtheorem} the second claim holds if the conditions of Theorem \ref{N-input-forAlice-two-inputOthersThreom1Charliemissing} or Corollary \ref{N-input-forAlice-two-inputOthersThreom} are satisfied in scenario $S$ but not in scenario $S'$. We will show that also in all other cases not covered by these two  $\mathcal{LF}(S) \subsetneq \mathcal{LF}(S')$.

    The inclusion $\mathcal{LF}(S) \subset \mathcal{LF}(S')$ holds by Theorem \ref{basicLFinclusionTHRM}. It is therefore sufficient to show that the set  $\mathcal{LF}(S')$ always contains behaviour that are not in $\mathcal{LF}(S)$.
      Let us first consider the case where $|I_F|< N$. In these cases, there exists at least one pair $s,t \in I_A$ such that $s,t \notin I_F'$ while simultaneously $s \in I_F, t \in t \notin I_F$ or $t \in I_F, s \notin I_F$ holds true. Suppose without loss of generality that the first condition holds. Consider the restriction of measurements in the sense of lemma \ref{ReducedMeasurementsLHV} of these two parties to include only two inputs, say $x_s , x_t \in \{1,2 \}$ per site. Now the conditions of Theorem \ref{N-input-forAlice-two-inputOthersThreom1Charliemissing} hold for this two-party sub-behaviour in the scenario with $I_F$, and so admits an LHV model. Meanwhile, analogously to the construction of Theorem \ref{NoLHVtheorem}, nothing prevents the parties $s,t$ of sharing PR-correlations in the scenario $I_F'$. This is a contradiction since it means that the set $\mathcal{LF}(S')$ contains behaviours for which the sub-behaviours are not consistent with demands on sub-behaviours of $\mathcal{LF}(S)$. It is easy to see that a similar argument can be constructed whenever $|I_F| - |I_F'| \geq 2$. This means that the only remaining cases consists of those where $|I_F| = N$ and $|I_F'| = N-1$ and the other conditions of Theorem \ref{N-input-forAlice-two-inputOthersThreom1Charliemissing} and Corollary \ref{N-input-forAlice-two-inputOthersThreom} fail. In particular, because $|I_F| = N$, it means that there must be at least one pair $s,t$ such that $|M_s| , |M_t| \geq 3$. We may narrow our attention to two type of cases; either $s,t \in I_F'$ or $s\in I_F', t \notin I_F'$. In the first case, there exists an $r$ such that $r\notin I_F'$ but by hypothesis $r \in I_F$. Hence  the scenario $I_F'$ allows the parties to share PR - correlations  for settings $x_{s/t} \in \{2,3 \}$ and $x_r 
      \in \{1,2 \}$ while in the scenario $I_F$, restriction to this pair of settings would still be compatible with the premiss of Theorem \ref{N-input-forAlice-two-inputOthersThreom1Charliemissing} and hence admit an LHV model. The second case is analogous; but with $x_s \in \{2,3 \}$ and $ x_t \in \{1,2 \},$ say. 
\end{proof}

\normalfont
Theorem \ref{LFinclusionTheorem}  shows that the LF polytopes form a (usually strict) hierarchy $\mathcal{LF}(S_p,I_F) \subset \mathcal{LF}(S_p,I_F') \subset \ldots \subset \mathcal{LF}(S_p,I_F'') $, whenever $ I_F'' \subsetneq \ldots \subsetneq I_F' \subsetneq I_F$, and shows that in general one needs to derive new sets of inequalities for scenarios with different numbers or friends. The notable exceptions to this are precisely the cases with $|I_F| =N, |I_F'| = N-1$ and the restrictions to the measurements as in Theorem \ref{N-input-forAlice-two-inputOthersThreom1Charliemissing} and Corollary \ref{N-input-forAlice-two-inputOthersThreom} where the polytopes admit an LHV model are satisfied. On the other extreme, the result strengthens the statement of Theorem \ref{LFequalNScorollary} to include the 'only if', that is $\mathcal{LF}(S_p,I_F) = \mathcal{NS}(S_p)$ iff $I_F = \emptyset.$ The proof also points out to the intuitive fact that the 'partial determinism' imposed by the assumptions regarding the friends measurements bring the polytope in a sense closer to the Bell-polytope by demanding more and more sub-behaviour to be compatible with an LHV model.

\begin{figure}
    \centering
    \includegraphics[width=\columnwidth]{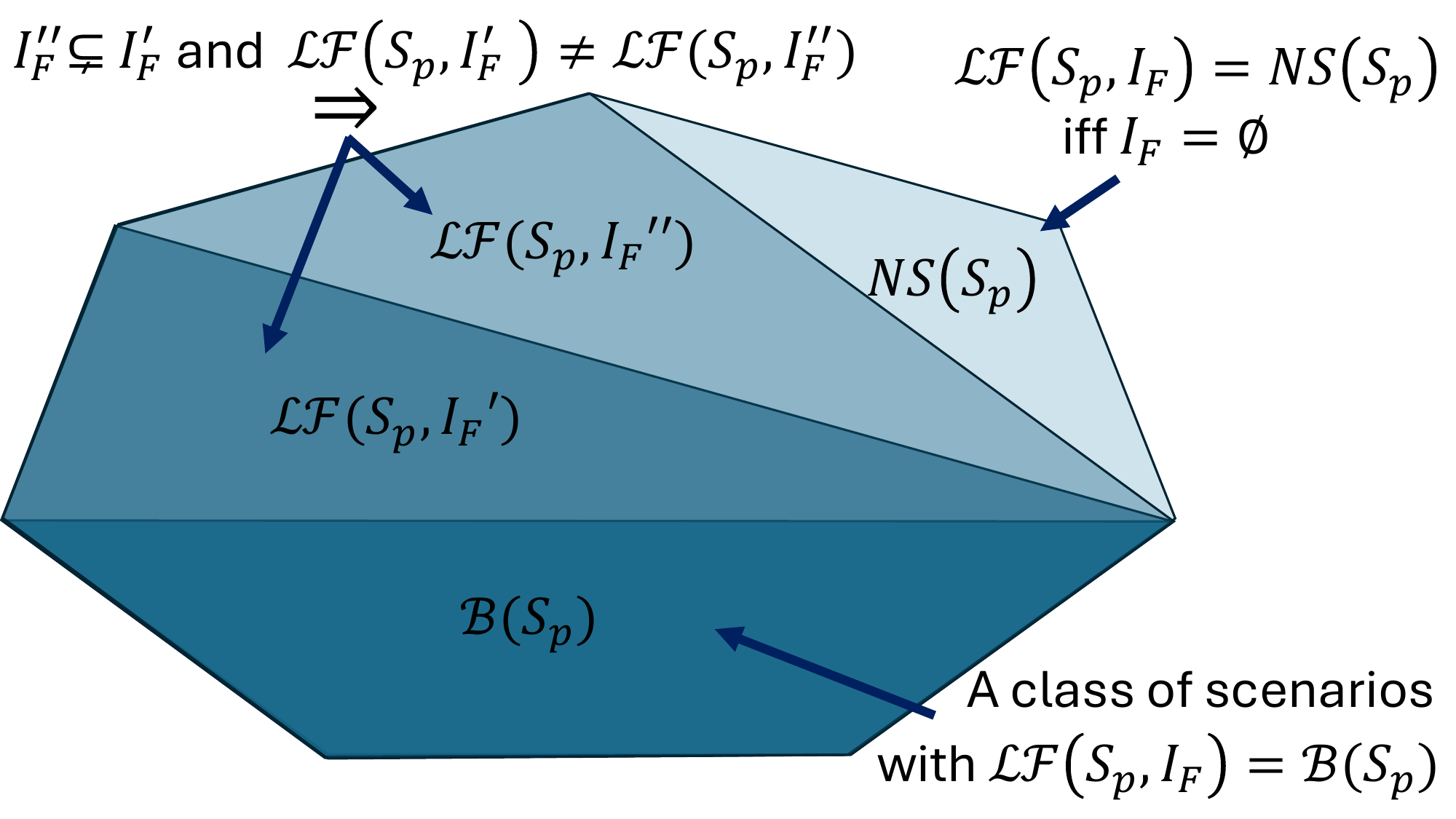}
    \caption{\label{fig:Firstresultsfig}
    An illustration of the geometry relating the basic sets $\mathcal{B}(S_p)$ and $\mathcal{NS}(S_p)$ to $\mathcal{LF}(S_p,I_F)$ in arbitrary canonical Local Friendliness scenarios. By Theorems \ref{LFpolytopeTheorem}-\ref{basicLFinclusionTHRM} the set $\mathcal{LF}(S_p,I_F)$ is always a convex polytope, which contains the Bell polytope $\mathcal{B}(S_p)$ and is contained in the no-signaling polytope $\mathcal{NS}(S_p).$ Theorem \ref{N-input-forAlice-two-inputOthersThreom1Charliemissing}, Corollary \ref{N-input-forAlice-two-inputOthersThreom} and Theorem \ref{GenuineLFgenerally} completely classify the equivalence class of scenarios where $\mathcal{LF}(S_p,I_F) = \mathcal{B}(S_p)$ hold, with constraints both on the set of friends $|I_F|\geq |I_A|-1$ and the set of inputs $M$. In particular, there are scenarios where $I_F' \subsetneq I_F''$ but $\mathcal{LF}(S_p,I_F')=\mathcal{LF}(S_p,I_F'')$, namely those where both equal the Bell polytope. Theorem \ref{LFinclusionTheorem} shows that in all other cases, where neither $\mathcal{LF}(S_p,I_F')$ or $\mathcal{LF}(S_p,I_F'')$ equal $\mathcal{B}(S_p)$,  with $I_F' \subsetneq I_F''$, the LF sets obey a strict hierarchy $\mathcal{LF}(S_p,I_F'')\subsetneq \mathcal{LF}(S_p,I_F')$.
    }
\end{figure} 

The main results up to Theorem \ref{LFinclusionTheorem} in terms of their geometric implications are illustrated in Fig.~\ref{fig:Firstresultsfig}. Fig.~\ref{fig:decisiontreefig} illustrates what kind of properties of a set $\mathcal{LF}(S)$ can be extracted relative to the sets $\mathcal{B}(S)$ and $\mathcal{NS}(S)$ based on the specification of the sets in $S$.

\begin{figure}
    \centering
    \includegraphics[width=\columnwidth]{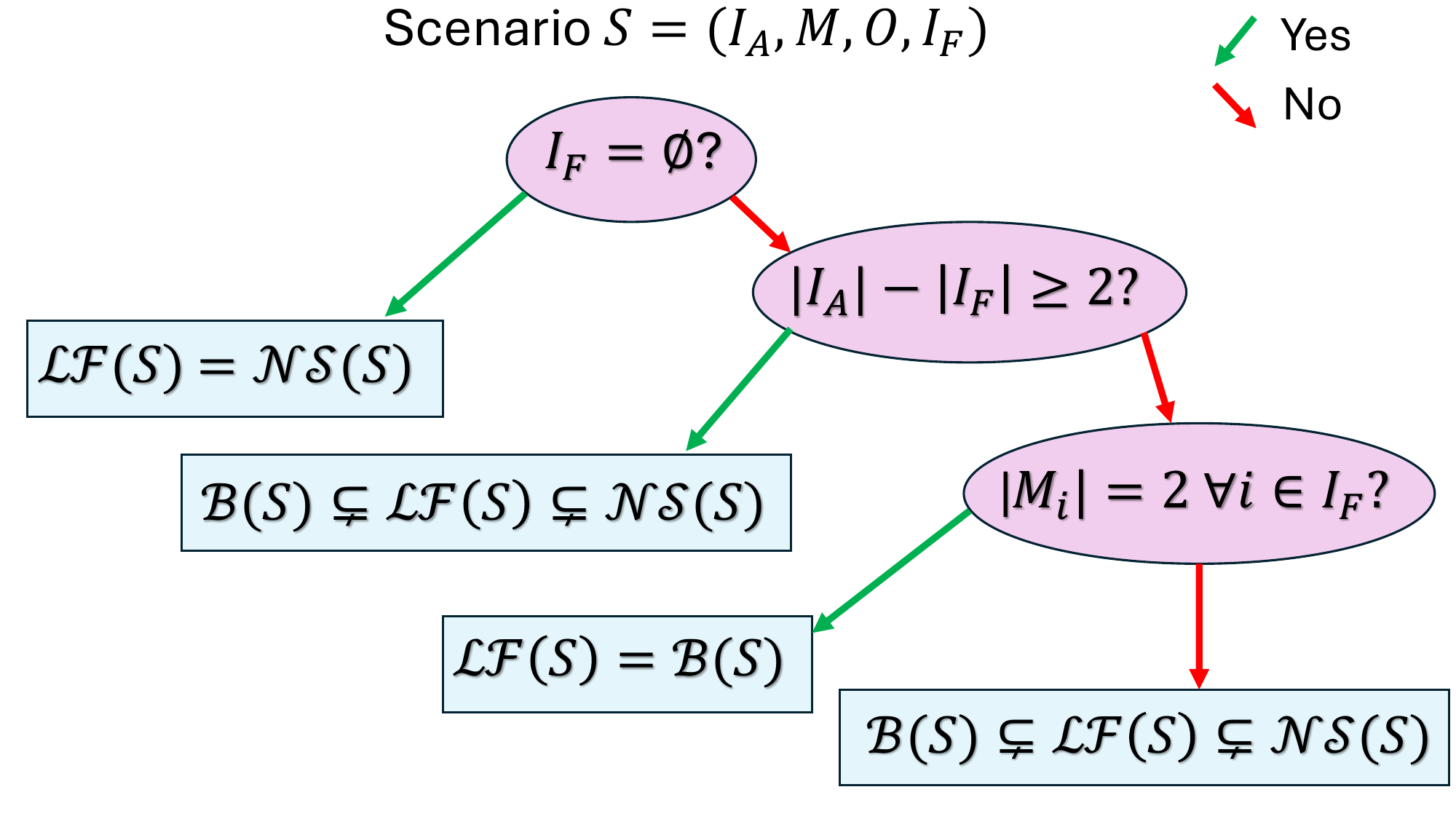}
    \caption{\label{fig:decisiontreefig} A decision tree illustrating what kind of information on the features of the set $\mathcal{LF}(S)$ can be extracted based on the parameters that define a canonical LF scenario $S=(I_A, M, O, I_F )$.
    }
\end{figure} 

Similar arguments can be also used to establish results regarding how LF polytopes relate to each other in scenarios where the sets $I_F$ and  $I_F'$ are at least partly disjoint. 

\begin{theorem} 
    \label{DisjointRelationTHRM}Let $S=(S_p,I_F)$ and $ S'=(S_p,I_F')$ be two otherwise identical scenarios but with $I_F \neq I_F'$. If $ I_F \not \subset I_F'$ and $I_F' \not \subset I_F$, then exactly one of the three following conditions holds true:
    \begin{enumerate}[label=\roman*), ref=\ref{DisjointRelationTHRM}.\roman* ]
        \item $|I_F| = |I_F'| = N-1$, $|M_i| = 2$ $ \forall i$ and  $\mathcal{LF}(S) = \mathcal{LF}(S') = \mathcal{B}(S_p)$ \label{disjointRelationResult1}

        \item  $ \mathcal{B}(S_p) =\mathcal{LF}(S) \subsetneq  \mathcal{LF}(S')$ or $ \mathcal{B}(S_p) = 
 \mathcal{LF}(S') \subsetneq \mathcal{LF}(S)$ \label{DisjointRelationResult2}

        \item  $\mathcal{LF}(S) \not \subset \mathcal{LF}(S')$ and $\mathcal{LF}(S') \not \subset  \mathcal{LF}(S)$. \label{DisjointRelationResult3}
    \end{enumerate} 
\end{theorem}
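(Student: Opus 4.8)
The plan is to reduce the whole statement to the classification already obtained in Theorem~\ref{GenuineLFgenerally}, and then to split into cases according to whether each of $\mathcal{LF}(S)$ and $\mathcal{LF}(S')$ collapses to the Bell polytope. The first thing I would record is that the incomparability hypothesis forces $I_F \neq I_A$ and $I_F' \neq I_A$: if, say, $I_F = I_A$, then $I_F' \subset I_A = I_F$, contradicting $I_F' \not\subset I_F$. Consequently neither scenario can satisfy the hypothesis of Corollary~\ref{N-input-forAlice-two-inputOthersThreom}, so by Theorem~\ref{GenuineLFgenerally} one has $\mathcal{LF}(S) = \mathcal{B}(S_p)$ \emph{if and only if} $S$ meets the hypothesis of Theorem~\ref{N-input-forAlice-two-inputOthersThreom1Charliemissing}, namely $|I_F| = N-1$ and every party in $I_F$ has two measurements; the same dichotomy holds for $S'$. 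This turns the statement into a bookkeeping problem governed by the two flags $b := [\mathcal{LF}(S) = \mathcal{B}(S_p)]$ and $b' := [\mathcal{LF}(S') = \mathcal{B}(S_p)]$.

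Next I would dispatch the three easy cases. If $b \wedge b'$ holds, then $I_F = I_A \setminus \{x\}$ and $I_F' = I_A \setminus \{y\}$ with $x \neq y$ by incomparability; every party other than $x$ has two measurements (from $b$) and every party other than $y$ has two measurements (from $b'$), so since $x \neq y$ \emph{all} parties have two measurements, which is exactly condition~\ref{disjointRelationResult1}. If exactly one flag holds, say $b \wedge \neg b'$, then $\mathcal{B}(S_p) = \mathcal{LF}(S)$ while $\mathcal{LF}(S') \neq \mathcal{B}(S_p)$; combined with the universal inclusion $\mathcal{B}(S_p) \subset \mathcal{LF}(S')$ of Theorem~\ref{LFpolytopeisSubsetSupersetBasic} this gives $\mathcal{B}(S_p) = \mathcal{LF}(S) \subsetneq \mathcal{LF}(S')$, i.e.\ condition~\ref{DisjointRelationResult2} (the other option is symmetric). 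Mutual exclusivity of \ref{disjointRelationResult1}--\ref{DisjointRelationResult3} is then immediate, as they assert respectively set equality, strict one-sided containment, and two-sided non-containment of $\mathcal{LF}(S)$ and $\mathcal{LF}(S')$, so establishing that one of them holds in every case yields the ``exactly one'' claim.

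The substantive case is $\neg b \wedge \neg b'$, where I must exhibit witnesses in both $\mathcal{LF}(S)\setminus\mathcal{LF}(S')$ and $\mathcal{LF}(S')\setminus\mathcal{LF}(S)$, giving condition~\ref{DisjointRelationResult3}. Mirroring the constructions of Theorems~\ref{NoLHVtheorem} and \ref{LFinclusionTheorem}, each witness is a behaviour embedding Popescu--Rohrlich correlations between a well-chosen pair of parties on a two-input, two-output restriction, with all remaining parties trivial and all friends copying their records; the pair is chosen so that the restricted sub-scenario forbids these correlations in one scenario (because it satisfies the bipartite instance of Theorem~\ref{N-input-forAlice-two-inputOthersThreom1Charliemissing}, forcing a local model, invoking Lemmas~\ref{ReducedPartiesLHV} and \ref{ReducedMeasurementsLHV} to pass to the marginal) while permitting them in the other (because the two chosen parties are effectively unconstrained there). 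Writing $A = I_F \cap I_F'$, $B = I_F \setminus I_F'$, $C = I_F' \setminus I_F$ and $D = I_A \setminus (I_F \cup I_F')$, a party of $D$ is free in both scenarios, so when $D \neq \emptyset$ a single $t \in D$ settles both directions: pair it with some $p \in B$ (a friend only in $S$) for one direction and with some $q \in C$ (a friend only in $S'$) for the other, restricting $p$ or $q$ to its query setting to force a local model on the side where it is a friend.

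The main obstacle is the remaining case $D = \emptyset$, i.e.\ $I_F \cup I_F' = I_A$, where no party is free in both scenarios and the ``both parties free'' construction is unavailable. Here I would exploit that $\neg b'$ together with $|I_F'| \le N-1$ guarantees, in scenario $S'$, either $|B|\ge 2$ or a friend $r \in I_F'$ with at least three measurements. For $\mathcal{LF}(S')\setminus\mathcal{LF}(S)$: if $|B| \ge 2$, pair two parties of $B$, which are free in $S'$ (so PR is allowed) and, restricted to two settings each, are both friends in $S$, so the bipartite case of Corollary~\ref{N-input-forAlice-two-inputOthersThreom} forces a local model and excludes PR; if $|B| = 1$, pair the large-alphabet friend $r \in I_F'$ with the unique $p \in B$, placing the PR correlations on $p$'s settings $\{1,2\}$ and $r$'s \emph{non-query} settings $\{2,3\}$. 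The key device is that a friend restricted to non-query settings carries no $\delta$-constraint and hence behaves as a free party: in $S'$ both $p$ and (non-querying) $r$ are effectively free, so PR survives, whereas in $S$ the querying friend $p$ remains and forces a local model, excluding PR. The direction $\mathcal{LF}(S)\setminus\mathcal{LF}(S')$ is identical after exchanging $B \leftrightarrow C$ and $S \leftrightarrow S'$. The one point demanding care is verifying that these PR-embedding behaviours genuinely lie in the asserted $\mathcal{LF}$ set, and not merely that their marginals are permissible; this follows the template already used in Theorems~\ref{NoLHVtheorem} and \ref{LFinclusionTheorem}.
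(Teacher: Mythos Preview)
Your proposal is correct and follows essentially the same strategy as the paper: split into cases according to whether each of $\mathcal{LF}(S)$, $\mathcal{LF}(S')$ collapses to the Bell polytope (using Theorem~\ref{GenuineLFgenerally}), and in the doubly non-collapsing case exhibit PR-embedding witnesses, using Lemmas~\ref{ReducedPartiesLHV}--\ref{ReducedMeasurementsLHV} and the bipartite instance of Theorem~\ref{N-input-forAlice-two-inputOthersThreom1Charliemissing} (or Corollary~\ref{N-input-forAlice-two-inputOthersThreom}) to exclude them from the other side. Your $A,B,C,D$ partition with the $D\neq\emptyset$ versus $D=\emptyset$ split is in fact more systematic than the paper's treatment of case~\ref{DisjointRelationResult3}, which explicitly handles only the representative sub-case of an $i\in I_F\cap I_F'$ with $|M_i|\ge 3$ and defers the remaining configurations to ``an analogous argument''; your analysis covers, for instance, the situation where all $|M_i|=2$ and both $|I_F|,|I_F'|\le N-2$, which does not fall under the paper's displayed sub-case.
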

\begin{proof}
\noindent

\begin{enumerate}[label=\roman*)]
         \item   $\mathcal{LF}(S) = \mathcal{LF}(S') = \mathcal{B}(S_p)$ clearly holds if conditions on the measurements that lead to Theorem \ref{N-input-forAlice-two-inputOthersThreom1Charliemissing} hold for both scenarios. It is sufficient to note that for this to be possible, by appeal to the assumed conditions on $I_F$ and $I_F'$,  it must be that  $|I_F| = |I_F'| = N-1$, $|M_i| = 2$ $ \forall i$.
         \item The second condition \ref{DisjointRelationResult2} clearly holds by appeal Theorem \ref{NoLHVtheorem}  if the premiss of Theorem \ref{N-input-forAlice-two-inputOthersThreom1Charliemissing} holds for one scenario but not the other.  For example if  $|I_F| = |I_F'| = N-1$, $|M_i| \geq 3$ for exactly one $i$, $|M_j|=2$ for $\forall j\neq i$ and additionally $i \notin I_F, i \in I_F'$ then $\mathcal{LF}(S) = \mathcal{B}(S_p) \subsetneq \mathcal{LF}(S')$, by Theorems \ref{N-input-forAlice-two-inputOthersThreom1Charliemissing} and \ref{NoLHVtheorem}. The   other case is true if $i\notin I_F', i \in I_F$. In fact, the same argument can easily be used to establish more generally that if either $|I_F'| < |I_F| =N-1$ or $|I_F| < |I_F'| = N-1$ and the conditions of Theorem $\ref{N-input-forAlice-two-inputOthersThreom1Charliemissing}$ hold for the scenario with $N-1$ elements then either $\mathcal{B}(S_p) = \mathcal{LF}(S) \subsetneq \mathcal{LF}(S')$ or $\mathcal{B}(S_p) = \mathcal{LF(S') \subsetneq \mathcal{LF}(S)}$ respectively.
         
         \item Lastly, we will show with similar arguments as in proof of Theorems \ref{NoLHVtheorem} and \ref{LFinclusionTheorem}, that in all other scenarios the last condition \ref{DisjointRelationResult3}, that $\mathcal{LF}(S) \not \subset \mathcal{LF}(S')$ and $\mathcal{LF}(S') \not \subset \mathcal{LF}(S)$, holds.  The remaining situations can, for the purposes of this proof,  be grouped to a single class; an analogous argument can be constructed in any such case. Here the scenarios are such that there exists at least one $i$ such that $|M_i| \geq 3$  while simultaneously $i \in I_F, I_F'$, so that neither polytope admits an LHV model.

         Now by hypothesis there always exist at least one pair of indices $s,t \in I_A$ such that $s\in I_F, s \notin I_F'$ and $t\in I_F', t \notin I_F$. 
        Therefore a behaviour for which  parties  $(t, i)$ share PR-correlations for their inputs $ x_t \in \{1,2 \}, x_i \in \{2,3\}$, say, is consistent with the LF model in scenario with $I_F$, while in the scenario $I_F'$ by Theorem \ref{N-input-forAlice-two-inputOthersThreom1Charliemissing} this two-party sub-behaviour would admit an LHV model, hence $\mathcal{LF}(S) \not \subset \mathcal{LF}(S') $. An analogous example where the parties $(s,i)$ with inputs $x_s \in \{1,2 \}, x_i \in \{2,3 \}$ share a PR box in scenario $I_F'$ can be used to demonstrate the converse, that also $\mathcal{LF}(S') \not \subset \mathcal{LF}(S) $.
     \end{enumerate}
\end{proof}

Theorem \ref{DisjointRelationTHRM} completes the picture with respect to classifying the sets that bound Local Friendliness behaviours in standard scenarios. It is worth noting that due to the asymmetry of the conditions implied by Local Friendliness-- which are exemplified in the form of Theorem \ref{N-input-forAlice-two-inputOthersThreom1Charliemissing}-- some attention in grouping Local Friendliness polytopes by their symmetries needs to be exhibited. Theorem \ref{DisjointRelationTHRM} shows that in general when the sets of friends are disjoint, one may expect the polytopes to be disjoint as well. While it is always possible in such cases to relabel the sets of friends so that one is included in the other, one should not expect, that such a relabeling correctly maps the inequalities bounding those polytopes in respective scenarios -- instead, one should consider permutations of superobserver-observer pairs, or permutations of friends in subgroups in which the numbers of measurements are equal.

\section{Discussion \label{DiscussionSection}}
In this section we present a few examples  that we believe to be of particular interest and  discuss some further implications of the results in Section \ref{ResultsSection}.

\begin{example} \label{MinimalLFexample}
(LHV model for LF behaviour in the minimal nontrivial Local Friendliness scenario)\\

Consider a canonical scenario with $I_A = \{1,2 \}$, $I_F = \{1 \}, M_1 = M_2 = \{1,2 \}$ and $O_{x_i} = \{-1, 1 \}$ $\forall x_i$. This is the smallest nontrivial Local Friendliness scenario. By Theorem \ref{N-input-forAlice-two-inputOthersThreom1Charliemissing} all behaviour compatible with the principle of Local Friendliness are LHV modelable, so testing the CHSH-inequalities which bound the CHSH-Bell polytope is sufficient. We demonstrate the usefulness of our construction by explicitly building the LHV model in this case. 

We follow the convention of referring to superobserver 1  as Alice, superobserver 2 as Bob and Alices friend as Charlie. The inputs and outputs of Alice and Bob respectively are denoted by $x, y$ and $a, b$ while the output of Charlie is denoted by $c$. 

The assumption of Local Friendliness means that the behaviour $\wp(a,b|x,y)$ decomposes as
    \begin{align}\label{CHSH-NinputBob}
        \wp(a,b|x,y) = \begin{cases}
        \sum_c \delta_{a,c} P^{\mathcal{LA}}(b|y, c) P(c) & \textrm{ when } x=1\\
        \sum_c P^{\mathcal{LA}}(a,b|x,y,c)P(c) & \textrm{ otherwise} \textrm.
    \end{cases}
    \end{align}
the distribution $P(b|y,c)$ has to satisfy 

\begin{align}
    P^{\mathcal{LA}}(b|y,c) = \sum_a P^{\mathcal{LA}}(a,b | x=2, y, c) 
\end{align}
for any $y \in \{1,2 \}$. This means that providing a separable model for the 2-distribution  set $ \{P^{\mathcal{LA}}(a,b| x=2, y, c) \}$ is sufficient, since every other context is already separable by eq. (\ref{CHSH-NinputBob}). The construction is the same as in Theorem \ref{N-input-forAlice-two-inputOthersThreom1Charliemissing}, and we present it without further discussion. Namely one may define
\begin{align}
\begin{split}
    & P(\alpha_{x = 2},\alpha_{y=1},\alpha_{y=2}|c) := \\
    & \dfrac{P^{\mathcal{LA}}(a,b_1 | x=2, y=1, c)  \cdot P^{\mathcal{LA}}(a, b_2 | x=2, y=2, c)}{P^{\mathcal{LA}}(a|x=2, c)},
\end{split}
\end{align}
which provides a joint distribution over the outcomes of Bob in every context, and the outcomes of Alice for the measurement labelled by 2. This means that one may write a joint distribution $P(\lambda)$ over the outcomes of the whole experiment by
\begin{align}
    P(\lambda) =  \sum_{c} P(\alpha_{x = 2},\alpha_{y=1},\alpha_{y=2}|c) \cdot \delta_{\alpha_{x=1}, c} P(c).
\end{align}
This provides a deterministic LHV model for the behaviour via 

\begin{equation}
    \wp(a,b|x,y) = \sum_\lambda \delta_{a,\alpha_x(\lambda)} \cdot \delta_{b, \alpha_y(\lambda)} \cdot P(\lambda),
\end{equation} where $\alpha_{x/y}(\lambda) \in O_{x/y}$. The number of outcomes did not need enter the argument, and by the construction of Theorem \ref{N-input-forAlice-two-inputOthersThreom1Charliemissing} the same proof can be generalized to any number of settings for Bob, but not for Alice. 
\end{example}

Example \ref{MinimalLFexample} is interesting because it portrays the smallest nontrivial  standard LF scenario and shows that any behaviour compatible with the assumption of Local Friendliness  can always be given an LHV model in such scenarios. As mentioned before, this result, and the fact that it generalizes to arbitrary number of settings for Bob, may  already be established as a consequence of the results in ref.~\cite{Woodhead2014} presented in the context of partially deterministic polytopes. Our technique differs from that approach however, as our construction directly shows how an LHV model may be constructed, given behaviour compatible with Local Friendliness, or equivalently a form of partial determinism. 

In Section \ref{ResultsSection} we pointed out that the set $\mathcal{LF}(S)$ always contains behaviour that are not in the quantum set $\mathcal{Q}(S)$, unless $\mathcal{LF}(S)= \mathcal{B}(S)$. Since the minimal scenario  of example \ref{MinimalLFexample}, or by  Corollary \ref{N-input-forAlice-two-inputOthersThreom} the variant with an additional friend,  can always be considered as a subscenario of a canonical LF scenario when $I_F \neq \emptyset$ in the sense of lemmas \ref{ReducedPartiesLHV} and \ref{ReducedMeasurementsLHV},  it follows that in such scenario at least some bipartite sub-behaviour of any $\wp(\Vec{a}|\Vec{x}) \in \mathcal{LF}(S)$ must be LHV-modelable. On the other hand it is easy to construct quantum counter-examples to such sub-behaviour, by allowing those two parties to share correlations that violate one of the CHSH-Bell inequalities for the restricted settings.  For ease of reference, we gather these results into the following corollary.

\corollary{\label{QuantumLFcorollary}Let $S$ be a standard LF scenario. Then one of the three conditions for the relationship between the quantum set $\mathcal{Q}(S)$  and Local Friendliness polytope $\mathcal{LF}(S)$ holds:
\begin{enumerate}[label=\roman*)]
    \item  $\mathcal{Q}(S) \subsetneq \mathcal{LF}(S) = \mathcal{NS}(S)$
    \item  $\mathcal{B}(S) = \mathcal{LF}(S) \subsetneq \mathcal{Q}(S)$
    
    \item $\mathcal{LF}(S) \not\subset \mathcal{Q}(S)$ and $\mathcal{Q}(S) \not\subset \mathcal{LF}(S)$.
\end{enumerate}
}
\begin{proof}
    Omitted. See the discussion following the proof of Theorem \ref{NoLHVtheorem} and the second paragraph following example \ref{MinimalLFexample}.
\end{proof}

\normalfont

This result is illustated in Fig.~\ref{fig:QuantumLFresultsfig}.

\begin{figure}
    \centering
    \includegraphics[width=\columnwidth]{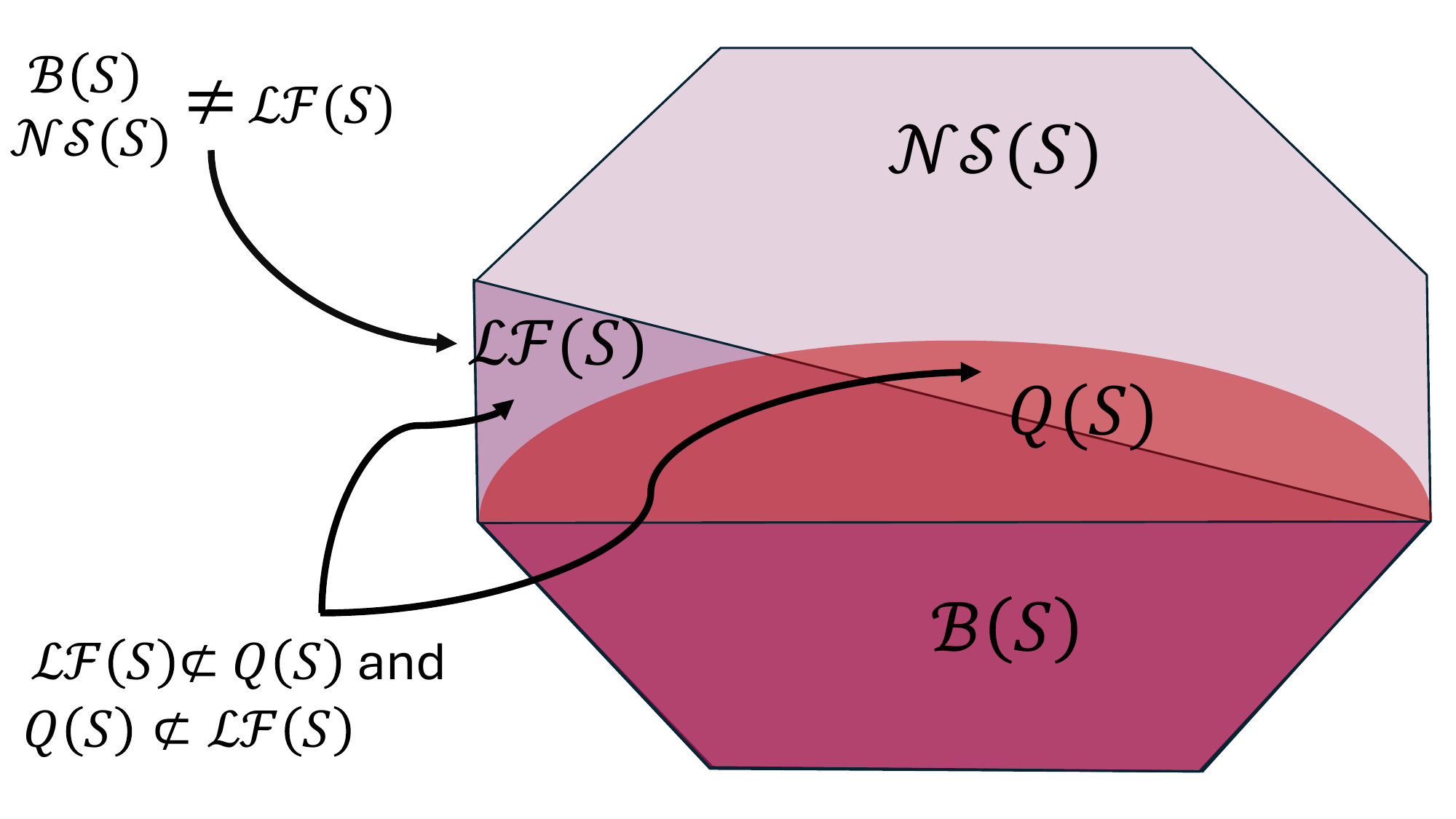}
    \caption{\label{fig:QuantumLFresultsfig}
A depiction of the relationship between the quantum set $Q(S)$ and $\mathcal{LF}(S)$ as established in Corollary \ref{QuantumLFcorollary}. Except in the extreme cases where $\mathcal{LF}(S)$ equals the no-signaling polytope $\mathcal{NS}(S)$  (when $Q(S) \subsetneq \mathcal{LF}(S)$) and where $\mathcal{LF}(S)$ equals the Bell polytope $\mathcal{B}(S)$ (when $\mathcal{LF}(S) \subsetneq  Q(S)$), in general, the quantum set is neither contained or fully contains $\mathcal{LF}(S)$. 
    }
\end{figure} 

The next example demonstrates the implications of Local Friendliness in a multipartite $N\geq 3$ scenario. 

\begin{example} \label{Threeparty-FriendLFexample}
    (LHV model for LF behaviour in 3-partite scenarios with $I_A = I_F = \{1,2,3 \}$, $M_i = \{1,2 \}$ $\forall i$, $|O_{x_i=s}| = 2 $ $\forall i, s$.)\\

    This case is a special case of that in Corollary \ref{N-input-forAlice-two-inputOthersThreom} and we will be brief here. In this case the joint distribution  ${P}(\vec{\alpha}|\vec{c})$ of Eq.~\eqref{massiveJoint2} collapses to a single distribution, namely
    \begin{align}
        {P}(\vec{\alpha}|\vec{c}) = P^{\mathcal{LA}}(a_{x_1=2},a_{x_2=2},a_{x_3=2}|\Vec{x}=222, \Vec{c}).
    \end{align}
Therefore  one may construct a joint distribution over all outcomes of the experiment and build an LHV model via setting $ P(\lambda) = \sum_{\Vec{c}} {P}(\vec{\alpha}|\vec{c}) \cdot  \prod_{i} \delta_{a_{x_i=1}, c_i}\cdot P(\vec{c})$ so the behaviour admits an LHV model of the form 
\begin{align}
    \wp(\vec{a}|\Vec{x}) = \sum_{\lambda} \prod_{i} \delta_{a_i, \alpha_{x_i}(\lambda)} \cdot P (\lambda),
\end{align}
with $\alpha_{x_i}: \Lambda \rightarrow O_{x_i}.$
\end{example}

As stated before, the consequences of Local Friendliness in multipartite scenarios have previously been largely unexplored in the literature. Our results identify all the scenarios in which the multipartite LF polytopes match the Bell polytopes and where they do not.  Example \ref{Threeparty-FriendLFexample} in particular may also be looked at in light of a recent work \cite{Ding2023}, which investigates violations of a tripartite correlation inequality as demonstrations of Brukner's no-go theorem \cite{Brukner2018} and the Local Friendliness no-go theorem \cite{Bong2020} from the perspective of tests on a quantum computer in such a scenario. They claim  \cite{Ding2023} that the same kind of tripartite correlation inequality has to hold for behaviour consistent with assumptions leading to both, respectively. We agree with this statement, based on the fact that the LF polytope equals the Bell polytope in this set up. We do however point out to discussions elsewhere \cite{Bong2020, Healey2018}, where it has been established that Brukner's no-go result \cite{Brukner2018} implicitly uses the assumption of noncontextuality in the sense of Kochen and Specker \cite{Kochen1967} in its derivation. Therefore behaviour compatible with Brukner's assumptions \cite{Brukner2018} will always obey the Bell inequalities, which by Theorem \ref{NoLHVtheorem}  is not the case for behaviour compatible with Local Friendliness in general. 

Theorem \ref{N-input-forAlice-two-inputOthersThreom1Charliemissing} implies that the experiment described in example \ref{Threeparty-FriendLFexample} could also be further simplified by removing one of the friends to match the set up portrayed in Figs.~\ref{fig:ThreepartyScenario} and \ref{fig:spacetimeThreeparty}, while still leading to the same set of inequalities. We believe these type of implications of our work to be of interest from an experimental perspective, as on one hand, there is motivation to keep the experimental designs as simple as possible, while on the other one would like to make sure no computational resources are unnecessarily dedicated for the purpose of seeking inequalities demonstrating a result of interest. 

While our main motivation for this work was exploring the general bounds imposed by Local Friendliness, we stress that our main results are purely  mathematical in their nature. As mentioned before, the bipartite Local Friendliness polytopes  had  previously been investigated from the perspective of device independent randomess certifiation in ref.~\cite{Woodhead2014}, where they may be identified as examples of objects termed  partially deterministic polytopes. One may thus conjecture, that the results of this work regarding bounds on the set of behaviour compatible with Local Friendliness could be of use in further contexts of device or semi -device independent information processing.

\section{Conclusion\label{ConclusionSection}}

 In this work, we have explored the mathematical properties of Local Friendliness behaviours in multipartite generalizations of the scenarios considered in ref.~\cite{Bong2020}. We have demonstrated that, analogously to the set of behaviours that admit a Local Hidden Variable model~\cite{Brunner2014}, the set of Local Friendliness behaviours can always be identified as a convex polytope.

 Among our results, we have completely characterised the scenarios in which the set of behaviours compatible with Local Friendliness assumptions admits an LHV model. Furthermore, our proofs are constructive, in the sense that we have shown how to build an LHV model if an LF model is given. We have completely characterised the inclusion relations between the set of Local Friendliness correlations of any scenario and the sets of LHV, quantum and no-signalling correlations. We have also completely characterised the inclusion relations between the LF polytopes for different scenarios. As corollaries we have demonstrated, for example, that in any scenario that contains at least one friend the Local Friendliness polytope is strictly contained in the no-signalling polytope, and that there always exists quantum correlations outside of the Local Friendliness set in any such nontrivial scenario. 
  
These results increase our understanding of the structure of LF correlations and their relationship with other well-known sets of correlations. They may also find applications in guiding experimental designs; for example, our results on the hierarchies of the sets of LF behaviours indicate when and if a given experiment may be simplified by removing some friends without losing any ability to demonstrate LF violations. Given that the complexity of vertex or facet enumeration problems for convex polytopes grows very quickly with the number of parameters, our analytical results may also help simplify such computational problems by pruning their search trees.

Some instances of Local Friendliness polytopes in canonical scenarios have also been recognised to be relevant in seemingly unrelated questions of device-independent information processing, under the name of ``partially deterministic polytopes'' \cite{Woodhead2014}. Since our main results concern the mathematical structure of LF polytopes, independently of the motivation from the foundational questions surrounding the Local Friendliness no-go result, we believe that the conclusions of this work could find use in those applications as well.

\section*{ACKNOWLEDGEMENTS}
MH would like to acknowledge useful discussions with Anibal Utreras-Alarcón with regard to scope of previously known results in bipartite cases. This work was supported by the Australian Research Council Future Fellowship FT180100317 and by the ARC Centre of Excellence for Quantum Computation and Communication Technology (CQC2T) project number  CE170100012. When finishing this manuscript we became aware of a concurrent work in ref. \cite{Walleghem2024} which also studied the relationship between the LF and Bell polytopes.

\bibliography{LFpolytopes.bib}
\end{document}